\documentclass[11 pt]{article}
\usepackage{hyperref}
\usepackage{graphicx,color,xcolor} 
\usepackage{amsmath,amssymb,amsfonts,amsthm}
\usepackage[title]{appendix}
\usepackage[backend=biber]{biblatex}
\newtheorem{thm}{Theorem}
\newtheorem{lemma}[thm]{Lemma}
\newtheorem{prop}[thm]{Proposition}
\newtheorem{defi}[thm]{Definition}

\newtheorem{coro}[thm]{Corollary} 

\newtheorem{assumption}[thm]{Assumption}

\usepackage{amsthm} 
\theoremstyle{remark} 
\newtheorem{remark}{Remark}

\newcommand{\Ind}{\mbox{Ind}}

\newcommand{\eps}{\varepsilon}

\renewcommand{\Re}{{\mbox{Re}}}
\renewcommand{\i}{{\rm{i}}}

\newtheorem{definition}[thm]{Definition}

\newcommand{\bbC}{\mathbb C}

\newcommand{\bbN}{\mathbb N}

\newcommand{\bbR}{\mathbb R}
\newcommand{\bbT}{\mathbb T}

\newcommand{\bc}{\mathbf c}

\newcommand{\bu}{\mathbf u} \newcommand{\bv}{\mathbf v}

 \newcommand{\bB}{\mathbf B}
  
\newcommand{\bE}{\mathbf E} 
 \newcommand{\bH}{\mathbf H}

\newcommand{\bU}{\mathbf U} \newcommand{\bV}{\mathbf V}

\newcommand{\beps}{\boldsymbol{\eps}}
\newcommand{\bmu}{\boldsymbol{\mu}}
\newcommand{\balpha}{\boldsymbol{\alpha}}
\newcommand{\bPhi}{\boldsymbol{\Phi}}

\newcommand{\cA}{\mathcal A} \newcommand{\cB}{\mathcal B}

\newcommand{\cG}{\mathcal G} 
\newcommand{\cI}{\mathcal I} \newcommand{\cJ}{\mathcal J}
 \newcommand{\cL}{\mathcal L}
\newcommand{\cM}{\mathcal M} 
  
\newcommand{\cQ}{\mathcal Q} 
 \newcommand{\cT}{\mathcal T}

\title{Spectral topology and skin modes for one-dimensional non-Hermitian photonic crystals}

 \author{
     Junshan Lin
  \thanks{Department of Mathematics and Statistics, Auburn University, Auburn, AL 36849.  \tt jzl0097@auburn.edu.}\,\,\,
 and
  Hai Zhang
  \thanks{Department of Mathematics,  HKUST,  Clear Water Bay, Kowloon, Hong Kong S.A.R., China. {\tt haizhang@ust.hk}. H.Z was partially supported by Hong Kong RGC grant GRF 16307024 and GRF 16301625.}} 

\date{}

\begin{document}

\maketitle

\begin{abstract} 
This work investigates skin modes in non-Hermitian photonic crystals with broken spectral reciprocity. In such systems, the spectra of the underlying operators can form closed loops over the complex plane with nontrivial spectral topology, which gives rise to the so-called skin effect characterized by eigenmodes localized at interfaces. For discrete lattice models, the skin effect can be understood through the spectral theory of Toeplitz matrices. However, this mathematical framework no longer applies to continuous wave models, where finite-dimensional approximations break down. In this work, we employ a transfer matrix approach to describe wave propagation in one-dimensional periodic media and introduce a new spectral topological invariant based on the eigenvalues of the transfer matrix. The new topological invariant is equivalent to the winding number of the non-Hermitian spectrum, and it enables the characterization of skin modes in one-dimensional non-Hermitian photonic crystals. The mathematical theory provides the theoretical foundation for the skin effect in continuous wave models.
\end{abstract}

\section{Introduction}
\subsection{Background}
Non-Hermitian topology has attracted significant research interest in recent years, due to its ubiquity in materials and unprecedented physics phenomena induced by the band topology. We refer the reader to the review articles \cite{ashida2020, bergholtz2021, ding2022, kawabata2019, okuma2023} and the references therein for recent theoretical developments in non-Hermitian topology, including topological phases in discrete non-Hermitian systems, anomalous bulk–edge correspondence, and exceptional points.

Compared to Hermitian systems which attain spectrum on the real line, the spectrum of non-Hermitian systems generally lies on the complex plane. As such, besides topology induced by the eigenfunctions (eigenvectors) of the operator, the eigenvalues of the non-Hermitian operator may attain topological invariant that is not present in the Hermitian operator, which could lead to new physical phenomena. 
One prominent discovery along this line of research is the so-called skin effect, which describes a macroscopic accumulation of skin (edge) modes localized on the boundary of a bulk medium under open boundary condition \cite{okuma2023, yao2018, zhang2022}.  Among many new topological invariants proposed for non-Hermitian operators, the one most relevant to the present work is the point-gap winding number, which predicts the emergence of the non-Hermitian skin effect and the accumulation of eigenmodes at the boundary.

The skin effect could be attributed to the topologically non-trivial features of the spectrum for the non-Hermitian system \cite{okuma2020, zhang2019}. For the tight-binding model, the relation between the spectral topology of the infinite periodic problem and the spectrum of the semi-infinite bulk medium is well understood through the spectral theory of the Toeplitz matrices \cite{okuma2020}. More precisely, the infinite periodic operator is a Laurent operator $L_p$ and the corresponding semi-infinite operator is a Toeplitz operator $T$. If the spectrum $\sigma(L_p)$ forms a closed curve on the complex plane with a nonzero winding number, then the spectrum $\sigma(T)=\sigma(L_p) \cup \sigma_{wind}$ (cf. \cite{bottcher2005}), where 
\begin{equation}\label{eq:sigma_T}
   \sigma_{wind} :=\{ \lambda\in \bbC\backslash \sigma(L_p): wind(\sigma(L_p)-\lambda) \neq 0   \}.
\end{equation}
In the above, $\sigma_{wind}$ corresponds to the spectrum of the skin modes, which is the area enclosed by the dispersion curve $\sigma(L_p)$. The spectral theory of Toeplitz operators has been further developed to study the non-Hermitian skin effect for continuous wave models \cite{ammari2024, ammari2025}. Therein, the non-Hermitian system consists of an array of subwavelength resonators with an imaginary gauge potential.
When the material contrast is high, asymptotic analysis shows that the underlying differential-equation model can be effectively reduced to a finite-dimensional system governed by the so-called capacitance matrix, which possesses a Toeplitz structure. In this work, we study the non-Hermitian skin effect in continuous wave models, for which finite-dimensional approximations break down, and the spectral theory of Toeplitz operators no longer applies.

\subsection{Main results}
We consider non-Hermitian photonic crystals with material loss and broken spectral reciprocity (see Definition \ref{def1}). Numerical studies in \cite{fang2022, zhong2021} indicate that the eigenvalues corresponding to the skin modes of the finite-size photonic crystal accumulate within the region enclosed by the spectrum $\sigma(L_p)$ of the infinite periodic operator, thereby manifesting the photonic skin effect. We aim to develop a rigorous mathematical framework that elucidates the spectral topology of non-Hermitian photonic crystals and the emergence of skin modes in semi-infinite photonic structures. 

Our main results show that the spectrum of skin modes is characterized by $\sigma_{wind}$, which admits the representation \eqref{eq:sigma_T}. The analysis rests on a transfer-matrix formulation of wave propagation in one-dimensional periodic media. Alongside the classical winding number of the spectral curve of the periodic non-Hermitian operator, we introduce a new spectral topological invariant defined through the eigenvalues of the transfer matrix. 
For simple bounded components over the complex plane, these two invariants coincide. Furthermore, the newly introduced invariant precisely characterizes the magnitude of the transfer matrix eigenvalues,
which allows for the characterization of the edge modes in terms of the corresponding eigenvectors.

\medskip

We summarize the problem setting and main results as follows:

\medskip

\noindent\emph{Problem setting.} We consider electromagnetic wave propagation in layered medium with periodic permittivity and permeability tensors $(\beps(z),\bmu(z))$ along the $z$ direction. The transverse field $\bPhi(z):=(E_x(z), E_y(z), H_x(z), H_y(z))^T\in\bbC^4$ is governed by the Maxwell's system \eqref{eq:ODE2}.

\medskip

\noindent\emph{The topological invariant.} 
The propagation of wave field $\bPhi(z)$ over one period can be described by the monodromy matrix $\cM(\omega)\in\bbC^{4\times4}$ (or transfer matrix starting on the left of the period; see \eqref{eq:monodomy_mat} for its definition).
For a connected component $D$ on the complex plane that does not intersect with the dispersion curves, we define the topological index 
\[
\Ind(D)=\tfrac12\big(\#\{j:|\lambda_j(\omega)|<1\}-\#\{j:|\lambda_j(\omega)|>1\}\big), \quad  \omega\in D,
\]
where the $\lambda_j(\omega)$ are the eigenvalues of $\cM(\omega)$. It is shown that $\Ind(D)$ is well defined and it attains an integer values $0$ or $\pm 1$.

\medskip

\noindent\emph{Main results.} The main results are given in Theorems~\ref{lem-0},~\ref{thm:ind_homotopy}, \ref{thm:Ind_Dn}, and \ref{thm:edge_modes}.
\begin{itemize}
\item[(i)] Under a spectral-gap hypothesis for a Hermitian medium, we prove that the complement $\Gamma^c$ of its spectrum in the complex plane has index $0$ (cf. Theorem~\ref{lem-0}). We further show that this index is invariant under homotopy within the admissible coefficient class (cf. Theorem~\ref{thm:ind_homotopy}).

Specifically, assuming that the non-Hermitian medium can be continuously deformed from a Hermitian one possessing a spectral gap, we characterize when a connected component $D$ of the
complement of the dispersion curves in the complex plane is a principal gapped component for the non-Hermitian medium attaining $\Ind(D)=0$.

\item[(ii)] Let \(D_n^i\) be the Jordan domain enclosed by all or part of a closed dispersion curve \(\gamma_n\), where
\(\gamma_n\) is the image of a globally defined spectral band
\(\omega_n:\mathbb{T}\to\mathbb{C}\). If \(D_n^i\) is adjacent to a principal gapped component \(D\) satisfying \(\Ind(D)=0\), then under mild regularity assumptions on the dispersion curve, we prove that the topological index $\Ind(D_n^i)=\pm 1$, where the sign is determined by the winding direction of the dispersion curve. (cf. Theorem~\ref{thm:Ind_Dn}).

\item[(iii)] Let \(D_n^i\) be a Jordan domain bounded by all or part of a closed dispersion curve \(\gamma_n\) which is the image of a globally defined spectral band
\(\omega_n:\mathbb{T}\to\mathbb{C}\). Assume that its topological index $\Ind(D_n^i)=\pm 1$. We prove that each $\omega \in D_n^i$ is an eigenvalue associated with a skin mode (cf. Theorem~\ref{thm:edge_modes}).

\end{itemize}

\subsection{Mathematical model for the one-dimensional photonic crystals}
We consider a periodic layered medium with the permittivity and permeability tensors varying along the $z$ direction, which are given by
\begin{equation}
   \beps(z) = \left(
  \begin{matrix}
       \eps_{xx}(z) & \eps_{xy}(z) & 0 \\
       \eps_{yx}(z) & \eps_{yy}(z) & 0 \\
         0  & 0 & \eps_{zz}(z)  
  \end{matrix}
   \right), 
   \quad
  \bmu(z) = \left(
  \begin{matrix}
       \mu_{xx}(z)  & \mu_{xy}(z) & 0 \\
       \mu_{yx}(z)  & \mu_{yy}(z) & 0 \\
         0  & 0 & \mu_{zz}(z)  
  \end{matrix}
   \right).
\end{equation}
Each component of $\beps(z)$ and $\bmu(z)$ is bounded and piecewisely continuous with the period $p=1$ such that
\begin{equation*}
    \beps(z+1)=\beps(z) \quad \mbox{and} \quad \bmu(z+1)=\bmu(z).
\end{equation*}

Consider a time-harmonic electromagnetic wave $\{\bE, \bH \}$
propagating along the $z$ direction with 
\begin{equation}\label{eq:EH}
    \bE = [ \, E_x(z), E_y(z), 0 \, ]^T \, e^{-\i\omega t} \quad\mbox{and}\quad \bH = [ \, H_x(z), H_y(z), 0 \,]^T \, e^{-\i\omega t}.
\end{equation}
The Maxwell's equations read as 
\begin{equation}\label{eq:Maxwell}
    \nabla \times \bE = \frac{\i \, \omega}{c} \bmu \bH, \quad \nabla \times \bH = - \frac{\i \, \omega}{c} \beps \bE.
\end{equation}

To simply the notation and the analysis, here and henceforth, we normalize the wave speed to $c=1$. By a slight abuse of notation, we continue to use $\{\bE, \bH \}$ to denote the transverse field components of the electromagnetic field by ignoring the $z$ component in \eqref{eq:EH} unless specified. Similarly, we use $\beps(z)$ and $\bmu(z)$ to denote the $2\times2$ transverse blocks corresponding to the $xy$-components of the permittivity and permeability tensors. More specifically, we introduce the following notations
\begin{equation*}
    \bPhi(z) := \left(\begin{matrix}
        \bE(z) \\ \bH(z)
    \end{matrix} \right) \in\bbC^4,
    \quad
    Q = \left(
\begin{matrix}
   0 & -1 \\
   1 & 0 
\end{matrix}
\right), \quad
    \cQ := \left(\begin{matrix}
         Q & 0 \\ 0 & Q
    \end{matrix} \right), 
    \quad
    \cA(z) := \i \, \left(\begin{matrix}
         0 & \bmu(z) \\  -\beps(z) & 0
    \end{matrix} \right). 
\end{equation*}
Then the full $6\times 6$ Maxwell system in one-dimensional medium can be reduced to the following $4\times 4$ ODE system
\begin{equation}\label{eq:ODE2}
    \frac{d}{dz} \bPhi(z) = \omega \cQ^{-1}\cA(z) \bPhi(z),
\end{equation}
where $\cA(z+1)=\cA(z)$.

If the permittivity and permeability tensors are positive definite matrices satisfying $\beps(z) = \beps^*(z)$, $\bmu(z) = \bmu^*(z)$, where $*$ denotes the conjugate transpose operation, the photonic crystal is said to be Hermitian. Otherwise, we call the photonic crystal non-Hermitian, which are typically made of lossy materials. The focus of this work is on non-Hermitian photonic crystals with broken spectral reciprocity. For such configurations, the dispersion curves may form closed loops over the complex plane and enclose nontrivial regions.

\bigskip

\begin{remark}
For clarity, we assume that $\beps$ and $\bmu$ are independent of $\omega$ so that the problem \eqref{eq:ODE2} is linear in $\omega$. This assumption is not essential for the mathematical techniques being used in this work, as the transfer-matrix construction in Section \ref{sec:band_structure_PC}, the topological index $\Ind(D)$ introduced in Section \ref{sec:top_idx_spec_gap} and the homotopy invariance of the topological index are independent of the form of $\beps$ and $\bmu$. However, for dispersive media $(\beps(z; \omega),\bmu(z; \omega))$,
the spectral theory of the nonlinear eigenvalue problem \eqref{eq:ODE2} is much more complicated than the linear one, and additional hypotheses need to be imposed for the main results to be valid. We do not pursue the study on such configuration here. 
\end{remark}

\subsection{Organization of the paper}
The rest of the paper is organized as follows. In Section 2, we study the band structure of the photonic crystals and relate the eigenvalues of the underlying differential operator with the eigenvalues of the corresponding transfer matrix. The concepts of spectral reciprocity and point gaps will be illustrated through detailed discussions. In Section 3,
we investigate the skin modes in semi-infinite photonic crystals when dispersion curves of the corresponding infinite periodic structures attain point gaps. It is shown that, for regular simple point-gap components, the winding number of the corresponding dispersion loop is equivalent to the new topological invariant defined via the eigenvalues of the underlying transfer matrix. Furthermore, we prove that the eigenfrequencies for the skin modes can be characterized by the winding number of the dispersion curves and the new topological invariant.

\section{Band structure of photonic crystals}\label{sec:band_structure_PC}
\subsection{General band theory and transfer matrix}
For each $k\in [-\pi, \pi]$, we consider the following eigenvalue problem for $z \in \bbR$:
\begin{subequations}
\label{eq:eig_prob1}
\begin{align}
\label{eq:ODE_model2}
&  \frac{d}{dz} \bPhi(k; z) = \omega \cQ^{-1}\cA(z) \bPhi(k; z),  \\ 
\label{eq:quasi_bnd}
& \bPhi(k; z+1) = e^{\i k} \bPhi(k; z).
\end{align}
\end{subequations}
In the above, the interval $\cB:=[-\pi, \pi]$ is called the Brillouin zone, $k$ is called the Bloch wavenumber,  $\omega$ is the eigenfrequency, and the corresponding eigenmode $\bPhi(k; z)$ is called the Bloch mode.

The solution of the ODE system \eqref{eq:ODE_model2} can be expressed via the transfer matrix. For each fixed $\omega\in\bbC$ and $z_1\in\bbR$, let $\cT_\omega(z,z_1)\in\bbC^{4\times4}$ be the transfer matrix associated with \eqref{eq:ODE_model2} that solves
\begin{equation}\label{eq:dT_dz}
\frac{d}{dz}\cT_\omega(z,z_1)=\omega\cQ^{-1}\cA(z)\cT_\omega(z,z_1),\qquad \cT_\omega(z_1,z_1)=\cI.    
\end{equation}
Then the solution of \eqref{eq:ODE_model2} at $z=z_2$ is related to the solution at $z=z_1$ by the following relation:
\begin{equation}\label{def:transfer_mat}
    \bPhi(z_2) = \cT_\omega(z_2, z_1) \bPhi(z_1). 
\end{equation}

\begin{lemma}
Let $\cT_\omega(z,0)$ be the transfer matrix 
that is defined by \eqref{eq:dT_dz} with $z_1=0$. Then $\mbox{det}(\cT_\omega(z,0))=1$ for any $z\in\bbR$.
\end{lemma}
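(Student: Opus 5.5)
The plan is to apply Liouville's (Abel–Jacobi's) formula for linear ODE systems: for $\cT_\omega(z,0)$ solving \eqref{eq:dT_dz}, the determinant satisfies
\begin{equation*}
\frac{d}{dz}\det \cT_\omega(z,0) = \omega\,\mathrm{tr}\big(\cQ^{-1}\cA(z)\big)\,\det \cT_\omega(z,0), \qquad \det\cT_\omega(0,0)=1,
\end{equation*}
so that $\det \cT_\omega(z,0)=\exp\big(\omega\int_0^z \mathrm{tr}(\cQ^{-1}\cA(s))\,ds\big)$. It therefore suffices to show that $\mathrm{tr}(\cQ^{-1}\cA(z))=0$ for every $z$.

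For the trace, note that $Q^{-1}=-Q$, so $\cQ^{-1}=\mathrm{diag}(-Q,-Q)$. Since $\cA(z)$ is block off-diagonal, the product
\begin{equation*}
\cQ^{-1}\cA(z)=\i\begin{pmatrix} 0 & -Q\bmu(z)\\ Q\beps(z) & 0\end{pmatrix}
\end{equation*}
is again block off-diagonal. Its diagonal $2\times 2$ blocks vanish, so its trace is zero. This holds for arbitrary (non-Hermitian, anisotropic) $2\times2$ blocks $\beps,\bmu$, and does not depend on $\omega$.

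The only delicate point is regularity. The coefficients are merely bounded and piecewise continuous, so $\cT_\omega(\cdot,0)$ is absolutely continuous and solves \eqref{eq:dT_dz} only almost everywhere (Carathéodory sense). I will handle this in one of two ways:
\begin{itemize}
\item[(a)] Observe that $\det \cT_\omega(z,0)$ is absolutely continuous. The identity $\frac{d}{dz}\det\cT=\mathrm{tr}(\mathrm{adj}(\cT)\,\cT')$ then holds a.e., and substituting $\cT'=\omega\cQ^{-1}\cA\cT$ gives $\frac{d}{dz}\det\cT=\omega\,\mathrm{tr}(\cQ^{-1}\cA)\det\cT=0$ a.e. Hence the determinant is constant, equal to its value $1$ at $z=0$.
\item[(b)] Alternatively, apply the classical formula on each subinterval where the coefficients are continuous, and use continuity of $\cT_\omega(\cdot,0)$ across the breakpoints. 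This uses the composition $\cT_\omega(z_3,z_1)=\cT_\omega(z_3,z_2)\cT_\omega(z_2,z_1)$ and multiplicativity of the determinant.
\end{itemize}
Either route gives $\det\cT_\omega(z,0)=1$ for all $z\in\bbR$, including $z<0$.
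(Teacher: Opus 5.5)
Your proposal is correct and follows the same route as the paper: Liouville's formula combined with $\mathrm{tr}(\cQ^{-1}\cA(z))=0$, which you verify explicitly from the block off-diagonal structure. Your version is slightly more careful than the paper's: you keep the factor $\omega$ in the exponent, which the paper's display drops (harmless, since the trace vanishes), and you justify the formula for coefficients that are only piecewise continuous.
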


\begin{proof}
    Using the Liouville's formula (cf. \cite{chicone2006}), we have
    \begin{equation*}
        \mbox{det}(\cT_\omega(z,0))= \mbox{det}(\cT_\omega(0,0)) \left( \exp{\int_0^z \mbox{tr}(\cQ^{-1}\cA(s))} \, ds \right) = \mbox{det}(\cT_\omega(0,0)) = 1,
    \end{equation*}
   where we have used the fact that $\mbox{tr}(\cQ^{-1}\cA(s))=0$.
\end{proof}

The eigenfrequency $\omega$ of \eqref{eq:eig_prob1} can be determined using the transfer matrix and the quasi-periodic boundary condition \eqref{eq:quasi_bnd}. More precisely, let 
\begin{equation}\label{eq:monodomy_mat}
   \cM(\omega):=\cT_\omega(1,0) 
\end{equation}
be the monodromy matrix. Then for each $k\in [-\pi, \pi]$, $\omega$ is an eigenfrquency of \eqref{eq:eig_prob1} if and only if $e^{\i k}$ is an eigenvalue of the matrix $\cM(\omega)$
 such that  
 \[\cM(\omega)  \bv = e^{\i k} \bv, \] 
 where $\bv \in \bbC^4$ is the corresponding eigenvector for the matrix $\cM(\omega)$. Meanwhile, the Bloch mode for  \eqref{eq:eig_prob1} with the eigevanfrequency $\omega$ at $z=0$ takes the value $\bPhi(k; 0)=\bv$. As such the dispersion relation $\omega(k)$ is determined by the following characteristic equation for the momodromy matrix $\cM(\omega)$:
 \begin{equation}\label{eq:chara1}
     \mbox{det}(\cM(\omega) - e^{\i k} I) = 0.
 \end{equation}
Setting $\lambda=e^{\i k}$, the above characteristic equation reads
\begin{equation}\label{eq:chara2}
 P(\omega):=\lambda^4 + a_3(\omega) \lambda^3 + a_2(\omega) \lambda^2 + a_1(\omega) \lambda + 1 = 0
\end{equation}
for some coefficients $a_j(\omega)$ $(j=1, 2, 3)$, which are all analytic in $\omega$.

\begin{lemma} \label{lem2}
 For each complex-valued $\omega\in\bbC$, there exist four roots to the characteristic equation \eqref{eq:chara2} denoted by $\lambda_i(\omega)$ ($i=1, 2, 3, 4$). Each function $\lambda_i(\omega)$ is locally analytic except at branch points.  More precisely, for any $\omega_0$,  $\lambda_i(\omega)$ is analytic near $\omega_0$ if 
 \[
 \lambda_i(\omega_0) \neq \lambda_j(\omega_0), \quad \mbox{for all}\,\,j \neq i.
 \]
Moreover, $\lambda_{1}\lambda_{2}\lambda_{3}\lambda_{4} =1$. 
\end{lemma}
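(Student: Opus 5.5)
The plan is to treat \eqref{eq:chara2} as a monic quartic in $\lambda$ whose coefficients depend analytically on $\omega$, and to obtain each assertion from standard facts about such polynomials.

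\textbf{Analyticity of the coefficients.} The entries of $\cT_\omega(z,0)$ are entire functions of $\omega$. This follows from analytic dependence on parameters for linear ODEs: \eqref{eq:dT_dz} is linear in $\omega$, and its Picard iteration converges locally uniformly in $\omega$ even though $\cA$ is only bounded and piecewise continuous. Hence $\cM(\omega)$ is entire.

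\textbf{Form of the characteristic polynomial.} Its characteristic polynomial is $\det(\lambda I-\cM(\omega))$. This is monic of degree four, and its coefficients are polynomials in the entries of $\cM(\omega)$, so each $a_j(\omega)$ is entire. The constant term equals $\det\cM(\omega)$, which is $1$ by the preceding lemma applied at $z=1$. This accounts for the form of \eqref{eq:chara2}.

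\textbf{Existence of four roots and their product.} By the fundamental theorem of algebra, for each fixed $\omega$ there are four roots $\lambda_1(\omega),\dots,\lambda_4(\omega)$, counted with multiplicity. Vieta's formula for the constant term gives $\lambda_1\lambda_2\lambda_3\lambda_4=(-1)^4\cdot 1=1$, which is equivalent to $\det\cM(\omega)=1$.

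\textbf{Local analyticity.} Fix $\omega_0$ and suppose $\lambda_i(\omega_0)$ is distinct from the other three roots. Then it is a simple root of $F(\omega,\lambda):=P(\omega)$, regarded as a function of both variables, so $\partial_\lambda F(\omega_0,\lambda_i(\omega_0))\neq 0$. The holomorphic implicit function theorem then yields a unique analytic function $\lambda(\omega)$ near $\omega_0$ with $F(\omega,\lambda(\omega))=0$ and $\lambda(\omega_0)=\lambda_i(\omega_0)$.

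One further point is needed: the root labeled $\lambda_i(\omega)$ must coincide with this implicit branch. I would establish this using continuity of roots, either via Rouché's theorem on a small disk around $\lambda_i(\omega_0)$ that excludes the other roots, or via the argument principle. These show that for $\omega$ near $\omega_0$ exactly one root lies in that disk, and it must be the implicit function branch. Alternatively, the argument principle gives the branch directly as the integral
\[
\lambda(\omega)=\frac{1}{2\pi \i}\oint \lambda\,\frac{\partial_\lambda F(\omega,\lambda)}{F(\omega,\lambda)}\,d\lambda,
\]
which is manifestly analytic in $\omega$.

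The points where roots coalesce are the zeros of the discriminant of \eqref{eq:chara2}. The discriminant is an entire function of $\omega$, so these points are isolated unless it vanishes identically; these are the branch points. By Puiseux theory the roots are algebroid there.

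\textbf{Expected obstacle.} The only delicate step is the labeling and continuity of the roots just described. Everything else is routine.
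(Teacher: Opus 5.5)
Your proposal is correct and takes essentially the same route as the paper, which simply defers the lemma to Kato's analytic perturbation theory for eigenvalues of an analytic matrix family; there, too, the eigenvalues are treated as roots of the characteristic equation with holomorphic coefficients. You supply the details the paper leaves implicit, namely that $\cM(\omega)$ is entire (via Picard iteration), that the constant term is $\det\cM(\omega)=1$ (from the Liouville lemma), and that the implicit function theorem together with Rouch\'e's theorem identifies a simple root near $\omega_0$ with an analytic branch, and all of these steps are sound.
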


Lemma \ref{lem2} follows from the analytic dependence of the eigenvalues of an analytic matrix family on the parameter; see, for instance,  \cite{kato1966}. In view of the above discussions, we have the following proposition that characterizes the eigenfrequencies of \eqref{eq:eig_prob1} using the monodromy matrix.

\begin{prop}\label{prop:rel_lambda_omega}
For each $k\in\cB$, the eigenfrequencies of \eqref{eq:eig_prob1} are precisely those $\omega\in\bbC$ for which the monodromy matrix $\cM(\omega)$ has an eigenvalue equal to $e^{\i k}$. More precisely, Let $\Omega(k)$ denote the set of eigenfrequencies of \eqref{eq:eig_prob1}, then
\begin{equation}\label{eq:rel_lambda_omega}
   \Omega(k)=\{\omega\in\bbC: \lambda_j(\omega) = e^{\i k} \,\,\mbox{for some} \,\, 1\leq j \leq 4\}. 
\end{equation}
\end{prop}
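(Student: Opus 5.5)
The argument is a two-way inclusion. It rests on the representation \eqref{def:transfer_mat} of solutions through the transfer matrix, together with the definition \eqref{eq:monodomy_mat} of $\cM(\omega)$. Fix $k\in\cB$ and $\omega\in\bbC$. By standard existence and uniqueness for linear ODEs with bounded, piecewise continuous coefficients (the solution is understood as absolutely continuous, satisfying \eqref{eq:ODE_model2} a.e.), every solution of \eqref{eq:ODE_model2} on $\bbR$ has the form $\bPhi(z)=\cT_\omega(z,0)\bPhi(0)$. Hence the map $\bPhi\mapsto\bPhi(0)$ is a linear bijection from the solution space onto $\bbC^4$.

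For the inclusion "$\subseteq$", let $\omega\in\Omega(k)$ with a nontrivial Bloch mode $\bPhi(k;\cdot)$, and set $\bv=\bPhi(k;0)$. Then $\bv\neq0$, since by uniqueness a solution vanishing at $z=0$ vanishes identically. Evaluating \eqref{eq:quasi_bnd} at $z=0$ and using \eqref{def:transfer_mat} gives $\cM(\omega)\bv=\bPhi(k;1)=e^{\i k}\bv$. So $e^{\i k}$ is an eigenvalue of $\cM(\omega)$, that is, a root of \eqref{eq:chara2}. By Lemma~\ref{lem2} it therefore equals $\lambda_j(\omega)$ for some $j$.

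For the inclusion "$\supseteq$", suppose $\lambda_j(\omega)=e^{\i k}$ for some $j$. Then $\det(\cM(\omega)-e^{\i k}I)=0$, so there is $\bv\neq0$ with $\cM(\omega)\bv=e^{\i k}\bv$. Define $\bPhi(z)=\cT_\omega(z,0)\bv$, which solves \eqref{eq:ODE_model2} and is nontrivial. It remains to verify \eqref{eq:quasi_bnd} for every $z$, not only at $z=0$, and this is the one step that needs care. Since $\cA(z+1)=\cA(z)$, the shifted function $\Psi(z):=\bPhi(z+1)$ solves the same ODE. So does $e^{\i k}\bPhi(z)$, by linearity. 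At $z=0$ we have $\Psi(0)=\cM(\omega)\bv=e^{\i k}\bv=e^{\i k}\bPhi(0)$. By uniqueness of the initial value problem, $\Psi\equiv e^{\i k}\bPhi$, which is exactly \eqref{eq:quasi_bnd}. Hence $\omega\in\Omega(k)$. This also recovers the remark in the text that the Bloch mode satisfies $\bPhi(k;0)=\bv$.

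The only potential obstacle is justifying uniqueness and the transfer-matrix representation when $\cA$ is merely piecewise continuous. I would handle this with Carathéodory solutions: $\cT_\omega(z,z_1)$ exists, is unique, and is invertible by the Lemma on $\det\cT_\omega(z,0)=1$. The rest is the bookkeeping described above.
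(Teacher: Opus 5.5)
Your proof is correct and takes the same route as the paper, which identifies the Bloch mode at $z=0$ with an eigenvector of $\cM(\omega)=\cT_\omega(1,0)$ for the eigenvalue $e^{\i k}$ and then reads off the characteristic equation \eqref{eq:chara1}. You make explicit two steps the paper leaves implicit: that $\bv\neq 0$, and that the shift $z\mapsto z+1$ together with uniqueness extends the quasi-periodic condition from $z=0$ to all $z$.
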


\subsection{Band structure of the Hermitian photonic crystals}
\subsubsection{Floquet-Bloch theory and the monodromy matrix}
In this section, we introduce the Band structure for one-dimensional Hermitian photonic crystals and analyze the associated monodromy matrices.  

We start from layered Hermitian media in which the permittivity and permeability tensors are positive-definite matrices satisfying $\beps(z) = \beps^*(z)$ and $\bmu(z) = \bmu^*(z)$.  By eliminating the magnetic field $\bH$ in the Maxwell's equations \eqref{eq:Maxwell}, we obtain the following second-order Maxwell system
\begin{equation}\label{eq:Maxwell_sys}
    \nabla \times (\bmu^{-1} \nabla \times \bE) - \omega^2 \beps \bE = \bf{0}.
\end{equation}
Using only the transverse electric field $\bE(k;z)=(E_x,E_y)^T\in\bbC^2$ and the transverse $2\times2$ blocks of the tensors $\beps(z),\bmu(z)$,  we obtain the following eigenvalue problem that is equivalent to \eqref{eq:eig_prob1}:
\begin{subequations}
\label{eq:eig_Maxwell}
\begin{align}
& \partial_z \Big( \bmu(z)^{-1} Q \partial_z \bE \Big) + \omega^2 Q \beps(z) \bE = 0,  \\ 
& \bE(k; z+1) = e^{\i k} \bE(k; z).
\end{align}
\end{subequations}
Introduce the operator \(\mathcal{L} = -Q \partial_z \bmu(z)^{-1} Q \partial_z\). 
Then the above eigenproblem can be transformed to the following generalized self-adjoint eigenvalue problem:
\[
\mathcal{L} \bE = \omega^2 \beps(z) \bE, \quad \bE(k; z+1) = e^{\i k} \bE(k; z),
\]
where \(\beps(z)\) acts as a positive-definite Hermitian weight matrix. It follows from the standard spectral theory for the self-adjoint second-order elliptic differential operator that \eqref{eq:eig_Maxwell}
attains a sequence of real eigenvalues ordered in the following way:
\begin{equation*}
    \nu_1(k) \le \nu_2(k) \le \cdots \le \nu_n(k) \le \nu_{n+1}(k) \le \cdots.
\end{equation*}
We focus on the non-negative eigenfrequencies $\omega_n:=\sqrt{\nu_n}$ for \eqref{eq:eig_prob1}, which is ordered as 
\begin{equation*}
    \omega_1(k) \le \omega_2(k) \le \cdots \le \omega_n(k) \le \omega_{n+1}(k) \le \cdots.
\end{equation*}

For each $n$, the dispersion relation $\omega_n(k)$ is a continuous function for $k\in \cB$. If $\max_{k\in\cB}\omega_n(k) < \min_{k\in\cB}\omega_{n+1}(k)$, we say there is a spectral gap between the band $\omega_n(k)$ and $\omega_{n+1}(k)$.
The collection of the countable set of dispersion relations $\{\omega_n(k)\}_{n=1}^\infty$ for $k\in \cB$ is called the band structure of the periodic medium. From the Floquet-Bloch theory \cite{kuchment2012floquet}, the spectrum of the Maxwell operator $\cL$ defined in \eqref{eq:Maxwell_sys} is given by $\sigma(\cL)=\displaystyle{\cup_{k\in\cB} (\cup_{n=1}^\infty \omega_n(k)})$. The spectrum $\sigma(\cL)$ is a collection of closed intervals on the real line, which are separated by the spectral gap intervals $(\max_{k\in\cB}\omega_n(k), \min_{k\in\cB}\omega_{n+1}(k))$ for $n=1, 2, 3, \cdots$. By virtue of Proposition \ref{prop:rel_lambda_omega}, 
$\sigma(\cL)$ can be equivalently described by the set
\[
\Omega := \{\omega\in \bbR: \lambda_j(\omega) \in \mathbb{T} \,\,\mbox{for some} \,\, 1\leq j \leq 4\},
\]
where $\mathbb{T}= \{ e^{\i k}; k \in (-\pi,\pi] \}$ is the unit circle on the complex plane.

For the Hermitian photonic crystals, the eigenvalues $\lambda_j(\omega)$ ($1\le j \le 4$) of the monodromy matrix $\cM(\omega)$ obey an additional symmetry when the frequency $\omega$ is real.  To explore the relation between these eigevanlues, we introduce the matrix
\[
\cJ=\begin{pmatrix}0 & Q^* \\ Q & 0\end{pmatrix}\in\bbC^{4\times4}.
\]
Then the quadratic form
$\tfrac12\,\bPhi^*\cJ\,\bPhi=\Re\!\big(E_x\overline{H_y}-E_y\overline{H_x}\big)$ represents the time-averaged
longitudinal Poynting flux. 
When the medium coefficient matrices $\beps(z)$ and $\bmu(z)$ are Hermitian, a straightforward calculation shows that the generator in the ODE system \eqref{eq:ODE_model2} satisfies
\begin{equation}\label{eq:flux-id}
\cJ\,\big(\cQ^{-1}\cA(z)\big)+\big(\cQ^{-1}\cA(z)\big)^*\cJ=0
\qquad\text{for a.e. }z\in\bbR,
\end{equation}
which implies that $\tfrac{d}{dz}\big(\bPhi^*\cJ\bPhi\big)=0$ for the solution of \eqref{eq:ODE_model2} with real $\omega$.

\begin{lemma}\label{lem:flux}
Assume $\beps(z) = \beps^*(z)$ and $\bmu(z) = \bmu^*(z)$,  and let $\omega\in\bbR$. Then the following holds:
\begin{itemize}
    \item [(i)] The monodromy matrix $\cM(\omega)$ satisfies  $\cM(\omega)^*\,\cJ\,\cM(\omega)=\cJ$.
    \item [(ii)] If $\lambda$ is an eigenvalue of $\cM(\omega)$, then $1/\bar\lambda$ is also an eigenvalue of $\cM(\omega)$ with the same algebraic multiplicities.
\end{itemize}
\end{lemma}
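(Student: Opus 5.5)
The plan is to turn the flux identity \eqref{eq:flux-id} into a conservation law for the full transfer matrix, and then read off the eigenvalue symmetry by linear algebra.

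\emph{Part (i).} Fix $\omega\in\bbR$ and set $F(z):=\cT_\omega(z,0)^*\,\cJ\,\cT_\omega(z,0)$. By \eqref{eq:dT_dz}, and since $\omega$ is real so that $(\omega \cQ^{-1}\cA)^*=\omega(\cQ^{-1}\cA)^*$, we get for a.e. $z$
\[
\frac{d}{dz}F(z)=\omega\,\cT_\omega(z,0)^*\Big[\big(\cQ^{-1}\cA(z)\big)^*\cJ+\cJ\,\cQ^{-1}\cA(z)\Big]\cT_\omega(z,0)=0 .
\]
The bracket vanishes by \eqref{eq:flux-id}. Because the coefficients are only piecewise continuous, $\cT_\omega(\cdot,0)$ is absolutely continuous (a Carathéodory solution). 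Hence $F$ is absolutely continuous with zero derivative a.e., so it is constant. Evaluating at $z=0$ gives $F(z)=\cJ$, and taking $z=1$ yields $\cM(\omega)^*\cJ\cM(\omega)=\cJ$.

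If one wants to verify \eqref{eq:flux-id} rather than take it as given, compute $\cQ^{-1}=-\cQ$, so that $\cQ^{-1}\cA=-\i\begin{pmatrix}0 & Q\bmu\\ -Q\beps & 0\end{pmatrix}$. Then multiply by $\cJ$ using $Q^*=-Q$, $Q^2=-I$ and the Hermitian property of $\bmu,\beps$. I expect this block computation to be the only place that needs care, mainly with sign bookkeeping.

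\emph{Part (ii).} First note that $\cJ$ is invertible, since $\cJ^2=\mathrm{diag}(Q^*Q,QQ^*)=\cI$. Part (i) therefore gives
\[
\cM^{-1}=\cJ^{-1}\cM^*\cJ ,
\]
so $\cM^{-1}$ is similar to $\cM^*$. Also $\cM$ is invertible, because $\det\cM=1$ by the first lemma, so every eigenvalue $\lambda$ is nonzero.

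Similar matrices have the same characteristic polynomial with multiplicities. The eigenvalues of $\cM^*$ are the $\bar\lambda_j$, and those of $\cM^{-1}$ are the $1/\lambda_j$, with the same algebraic multiplicities. Consequently, the multiset $\{\bar\lambda_j\}$ coincides with $\{1/\lambda_j\}$.

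Equivalently, taking conjugates, $\{\lambda_j\}$ coincides with $\{1/\bar\lambda_j\}$ as multisets. Thus if $\lambda$ is an eigenvalue of multiplicity $m$, then $1/\bar\lambda$ is an eigenvalue of the same multiplicity $m$. This follows directly from $\det(\cM^{-1}-\mu I)=\det(\cM^*-\mu I)$ together with the relation $\det(\cM^{-1}-\mu I)=\det(\cM)^{-1}(-\mu)^4\det(\cM-\mu^{-1}I)$.
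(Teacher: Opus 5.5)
Your proof is correct. Part (i) is the paper's argument verbatim. You additionally verify \eqref{eq:flux-id} and justify the constancy of $\cT_\omega^*\cJ\cT_\omega$ via absolute continuity, both of which the paper leaves implicit.

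For part (ii) you start from the same intertwining relation as the paper, $\cJ\cM=(\cM^*)^{-1}\cJ$, but you finish differently. The paper pushes individual eigenvectors through $\cJ$. It then deduces equal algebraic multiplicities from an isomorphism of eigenspaces, but as written this only controls geometric multiplicity. Also, the image of $\ker(\cM-\lambda)$ under $\cJ$ is $\ker(\cM^*-1/\lambda)$, not the eigenspace of $\cM$ at $1/\bar\lambda$.

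Your route treats $\cM^{-1}=\cJ^{-1}\cM^*\cJ$ as a similarity and compares characteristic polynomials. This gives the equality of the multisets $\{\lambda_j\}$ and $\{1/\bar\lambda_j\}$, counted with algebraic multiplicity, in one step. It is therefore the cleaner proof of the statement as formulated. To reach the same conclusion, the paper's eigenvector argument would need generalized eigenspaces: $\cJ$ maps $\ker(\cM-\lambda)^k$ onto $\ker((\cM^*)^{-1}-\lambda)^k$ for every $k$.
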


\begin{proof}
Let $\cG(z):=\omega\,\cQ^{-1}\cA(z)$. Then in view of  \eqref{eq:dT_dz}, there holds
$$ \frac{d}{dz}\cT_\omega(z,0)=\cG(z)\,\cT_\omega(z,0),  \; \cT_\omega(0,0)=\cI. $$ 
Since $\omega\in\bbR$,
the identity \eqref{eq:flux-id} gives $\cG(z)^*\cJ+\cJ\cG(z)=0$ for a.e.\ $z$. Therefore,
\[
\frac{d}{dz}\Big(\cT_\omega(z,0)^*\,\cJ\,\cT_\omega(z,0)\Big)
=\cT_\omega(z,0)^*\big(\cG(z)^*\cJ+\cJ\cG(z)\big)\cT_\omega(z,0)=0,
\]
as such $\cT_\omega(z,0)^*\cJ\,\cT_\omega(z,0)$ is constant in $z$. By evaluating $\cT_\omega(z,0)^*\cJ\,\cT_\omega(z,0)$ at $z=1$ and $z=0$, respectively, we obtain
\begin{equation}\label{eq:id_MJ}
 \cM(\omega)^*\cJ\,\cM(\omega)  =\cJ.    
\end{equation}

Recall that $\det\cM(\omega)=1$, the relation \eqref{eq:id_MJ} may be rewritten as
\begin{equation}\label{eq:id_MJ_rewritten}
    \cJ\,\cM(\omega)=\big(\cM(\omega)^*\big)^{-1}\cJ. 
\end{equation}
Let $(\lambda, \bv)$ with $\bv\in\bbC^4\setminus\{0\}$ be an eigenpair of $\cM(\omega)$ such that 
$\cM(\omega)\,\bv=\lambda\,\bv$.  Applying $\cJ$ and using the
relation \eqref{eq:id_MJ_rewritten}, we obtain
\[
\big(\cM(\omega)^*\big)^{-1}(\cJ\bv)=\cJ\big(\cM(\omega)\bv\big)=\lambda\,(\cJ\bv).
\]
It follows that $\lambda$ is an eigenvalue of $\big(\cM(\omega)^*\big)^{-1}$, 
which further implies that $1/\bar\lambda$ is also an eigenvalue of $\cM(\omega)$.
Since $\cJ$ is invertible, it maps the 
eigenspace of $\lambda$ isomorphically onto that of $1/\bar\lambda$. Hence the eigenvalues $\lambda$ and $1/\bar\lambda$ share the same algebraic multiplicity.
\end{proof}

\subsubsection{Hermitian photonic crystals with spectral reciprocity}

\begin{defi} \label{def1}
A photonic crystal is said to attain \textbf{spectral reciprocity} if $\omega_n(k)=\omega_n(-k)$ holds for any $k\in\cB$ and $n\in\bbN^+$. Otherwise, the spectrum of the photonic crystal is called non-reciprocal. 
\end{defi} 
We first consider the Hermitian photonic crystal for which the system \eqref{eq:eig_prob1} attains the spectral reciprocity. In general, the spectral reciprocity holds when the symmetry group of the period medium attains a certain operation $g$ such that $g \bPhi(k,\cdot) =  \bPhi(-k,\cdot)$. \\

\noindent Example 1. (spatial inversion symmetry): $\beps(z) = \beps(-z)$, $\bmu(z) = \bmu(-z)$: Let $k\in\cB$ and $(\omega, \bPhi(k;z))$ be an eigenpair of \eqref{eq:eig_prob1}, then it can be verified that $\tilde\bPhi(k;z):=\big( \bE(-k; -z), -\bH(-k; -z) \big )^T$ satisfies \eqref{eq:ODE_model2} with the boundary condition $\tilde\bPhi(k; 1) = e^{\i k} \tilde\bPhi(k; 0)$. Hence $(\omega, \tilde\bPhi(k;z))$ is an eigenpair of \eqref{eq:eig_prob1}.
\\

\noindent Example 2. (time-reversal symmetry): $\beps(z) = \overline{\beps(z)}$ and $\bmu(z) = \overline{\bmu(z)}$. Time-reversal symmetry says that if $\{ \bE(z,t), \bH(z,t) \}$ is a solution of the time-dependent Maxwell's equations, so is $\{ \bE(z,-t), -\bH(z,-t) \}$. 
Note that for the time-harmonic electromagnetic field that satisfies \eqref{eq:eig_prob1}, using the quasi-periodic condition $\bPhi(k;z+1)=e^{\i k}\bPhi(k;z)$, the Bloch mode can be expressed as $\bPhi(k;z)=e^{\i kz}\bU(k;z)$, where
\begin{subequations}
\label{eq:eig_prob2}
\begin{align}
 &  \left( \frac{d}{dz}  + \i k \right) \bU(k; z) = \omega \cQ^{-1}\cA(z) \bU(k; z),  \\
& \bU(k; 1) = \bU(k;0).
\end{align}
\end{subequations}
Let  $(\omega, \bU(k;z))$ be an eigenpair of \eqref{eq:eig_prob2}, where 
$\bU(k; z)=[\bu_1(k;z), \bu_2(k;z)]^T$.
Then the time-reversal symmetry implies that  $(\omega, \bV(-k;z))$ is an eigenpair of \eqref{eq:eig_prob2} for $-k$, where $\bV(-k; z)=[ \overline{\bu_1(k;z)}, -\overline{\bu_2(k;z)}]^T$.

\subsubsection{Hermitian photonic crystals breaking the spectral reciprocity}

From the discussion in the previous subsection, in order to break the spectral reciprocity of the photonic crystal, one needs to break the spatial inversion symmetry and time-reversal symmetry of the system. The former can be achieved by using more than two layers with different medium parameters, and the latter can be achieved by using magnetic photonic crystals  \cite{figotin2001}.

To this end, we introduce the so-called $A$ and $F$ layers as follows. The $A$ layers are made of a non-magnetic dielectric material with anisotropy over the $xy$ plane:
\begin{equation}\label{eq:para_A}
  \beps_A(\delta, \varphi) = \left(
  \begin{matrix}
       \eps_0 + \delta \cos(2\varphi) &  \delta \sin(2\varphi)  & 0 \\
        \delta \sin(2\varphi) & \eps_0 - \delta \cos(2\varphi)  & 0 \\
         0  & 0 & \eps_{zz}
  \end{matrix}
   \right), 
   \quad
  \bmu_A = I.
\end{equation}
The tensor $\beps_A$ is real valued, where the parameter $\delta$ describes the magnitude of in-plane anisotropy, while the angle $\varphi$ defines the orientation of the principal axes
of tensor $\beps_A$ on the $xy$ plane. The $F$ layers are ferromagnetic with magnetization parallel to the $z$ direction, for which the permittivity and permeability tensors are given by
\begin{equation}\label{eq:para_F}
   \beps_F(\alpha) = \left(
  \begin{matrix}
        \tilde\eps_0  & \i \alpha & 0 \\
        -\i \alpha & \tilde\eps_0 & 0 \\
         0  & 0 & \eps_{zz}
  \end{matrix}
   \right), 
   \quad
  \bmu_F(\beta) = \left(
  \begin{matrix}
       1 & \i \beta & 0 \\
     - \i \beta & 1 & 0 \\
         0  & 0 & \mu_{zz}
  \end{matrix}
   \right).
\end{equation}
The real parameters $\alpha$ and $\beta$ are responsible for the magnetic Faraday rotation.
It is clear that $\beps_F(\alpha)\neq\overline{\beps_F(\alpha)}$ and $\bmu_F(\beta)\neq\overline{\bmu_F(\beta)}$.

The spectral asymmetry can be achieved by stacking the $A$ and $F$ layers together. 
Let us consider a periodic medium for which each periodic cell consists of one $F$ layer sandwiched by two $A$ layers with different rotations $\varphi_1$ and $\varphi_2$. The corresponding medium parameters for the primitive cell are given by 
\begin{equation}\label{eq:para_three_layers}
    \beps(z) = \left\{ 
\begin{array}{lll}
     \beps_A(\delta, \varphi_1),  & 0<z<L;   \\
     \beps_F(\alpha),            & L<z<1-L; \\
     \beps_A(\delta, \varphi_2),  & 1-L < z < 1.
\end{array}    
    \right.
\quad    
\bmu(z) = \left\{ 
\begin{array}{lll}
     \bmu_A,           &       0<z<L;   \\
     \bmu_F(\beta),    & L<z<1-L; \\
     \bmu_A,  & 1-L < z < 1.
\end{array}    
    \right.    
\end{equation}
It the above, it is assumed that $\varphi_1-\varphi_2 \neq 0, \frac{\pi}{2}$, and $\alpha,$ $\beta \neq 0$. The periodic medium does not attain the spatial inversion symmetry by using three layers with different medium parameters.
The time-reversal symmetry is also broken by using the F layer. Hence, the corresponding Maxwell's operator may achieve spectral non-reciprocity $\omega_n(k)\neq\omega_n(-k)$. To illustrate the spectral asymmetry, let us set $\eps_0=13$, $\tilde\eps_0=1$ in \eqref{eq:para_A} and \eqref{eq:para_F}, and consider the following example. \\

\noindent Example 3. Consider a three-layer periodic medium with the permittivity and permeability tensors in the form of \eqref{eq:para_three_layers}, with the parameters $\delta=6$, $\varphi_1=0$, $\varphi_2=0.8$, $\alpha=\beta=0.5$. The corresponding monodromy matrix is expressed as a composition of the transfer matrices for the $AFA$ layer as $\cM(\omega)=\cT_A(\omega,\varphi_1;L) \cT_F(\omega;1-{\color{red}2L})\cT_A(\omega, \varphi_2;L)$, where the explicit expression of the $A$ layer and $F$ layer are given in Appendix \ref{sec:transfer_mat}. Solving the characteristic equation \eqref{eq:chara2} for each $k\in\cB$ yields the band structure of the periodic medium as shown in Figure \ref{fig:band_three_layer} (Left).

\begin{figure}[!htbp]
    \centering
    \includegraphics[width=8cm]{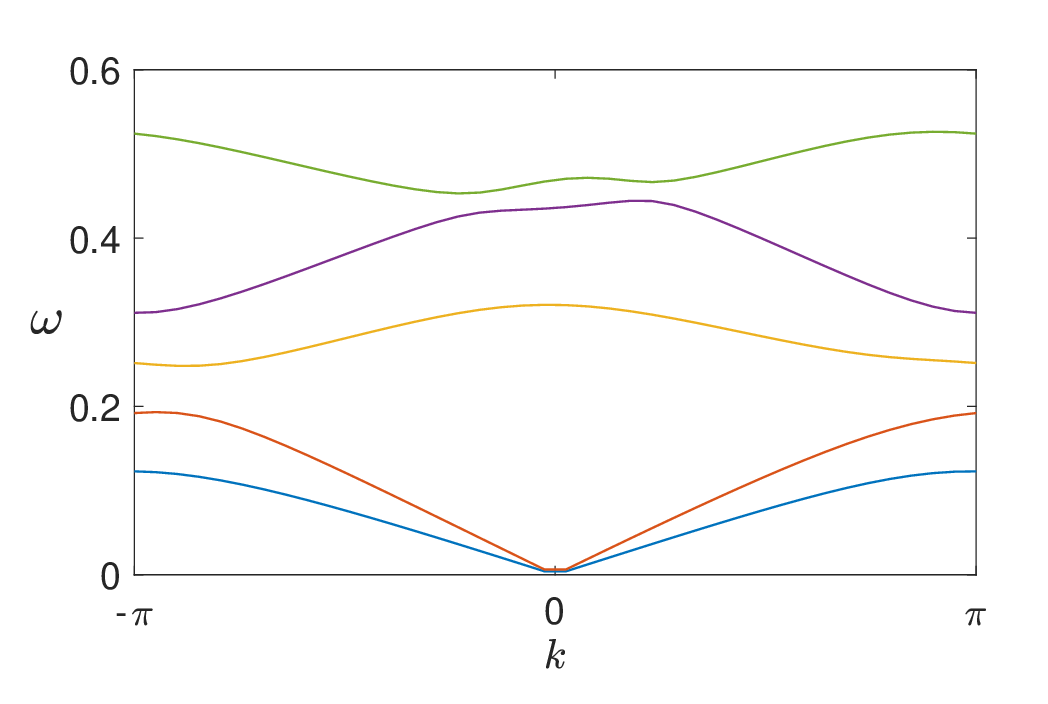}
    \includegraphics[width=7.5cm]{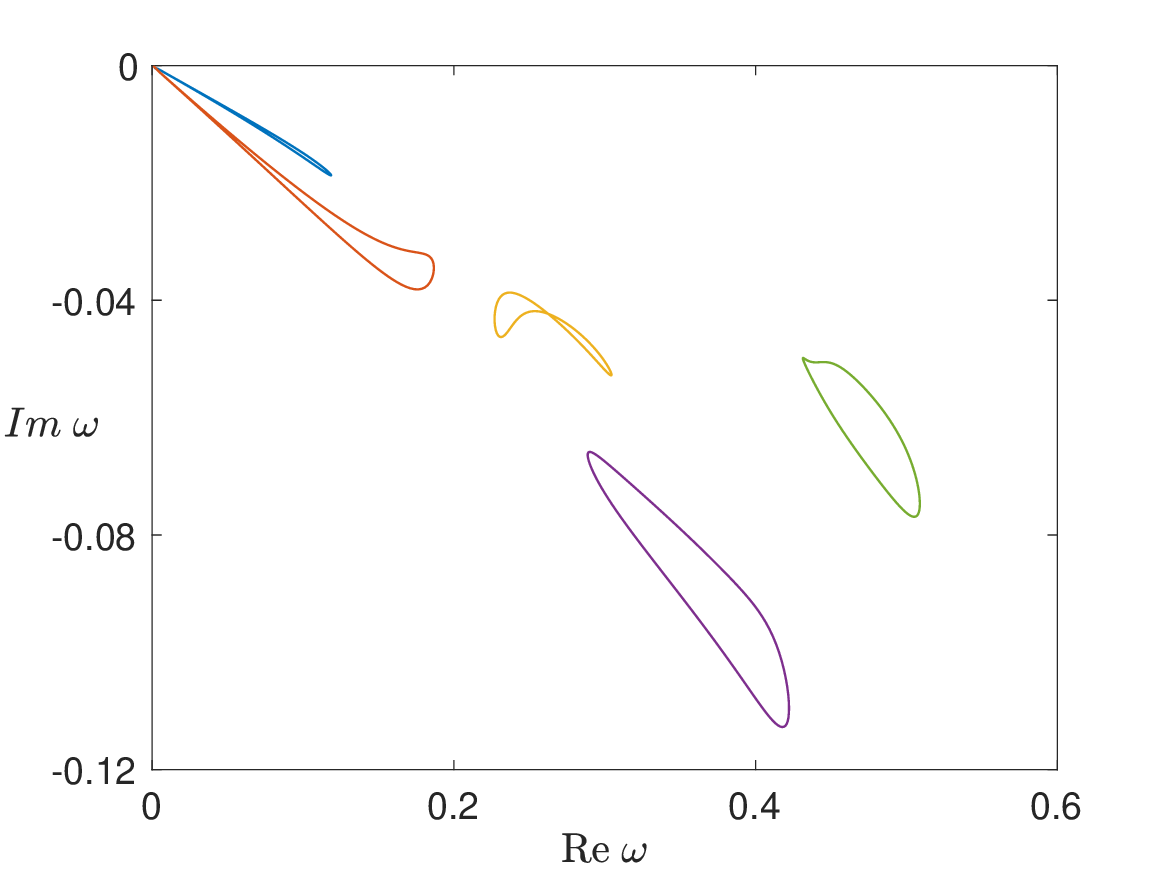}
    \caption{Band structures of the three-layer periodic media which attain spectral asymmetry. The permittivity and permeability are given in the form of \eqref{eq:para_three_layers}, with $\delta=6$, $\varphi_1=0$, $\varphi_2=0.8$, $\alpha=\beta=0.5$.
    Left: Hermitian photonic crystal with $\eps_0=13$, $\tilde\eps_0=1$;  Right: Non-Hermitian photonic crystal with $\eps_0=13+5\i$, $\tilde\eps_0=1$.  }
    \label{fig:band_three_layer}.
\end{figure}

\subsection{Band structure of non-Hermitian photonic crystals}\label{sec:band_non-Hermitian}
When $\beps(z)\neq \beps^*(z)$ or $\bmu(z)\neq \bmu^*(z)$, as is the case, for instance, when the photonic crystal is composed of lossy materials, the eigenfrequencies of \eqref{eq:eig_prob1} may become complex-valued for each $k\in\cB$. By Proposition~\ref{prop:rel_lambda_omega}, the spectral set $\Omega(k)$ can equivalently be characterized in terms of the monodromy matrix through \eqref{eq:rel_lambda_omega}. With the the family of spectral sets $\{\Omega(k): k\in\cB \}$, we define the spectral bands $\{\omega_n(k)\}_{n=1}^\infty$ such that the following are satisfied:
\begin{itemize}
    \item[(i)] Each spectral dispersion relation $\omega_n(k)$ is a continuous function of $k$;
    \item[(ii)] $\min_{k\in\cB} \; \Re \, \omega_n(k)\le \min_{k\in\cB} \Re\; \omega_{n+1}(k)$ and $\max_{k\in\cB} \Re \; \omega_n(k)\le \max_{k\in\cB} \Re \; \omega_{n+1}(k)$.
\end{itemize}
We consider the photonic crystal $(\beps(z),\bmu(z))$ for which there is no ambiguity to define the dispersion curves $\{\omega_n(k)\}_{n=1}^\infty$ with the above criteria.

\medskip

\begin{remark}
The spectral dispersion relation $\omega_n(\cdot)$ defined above is analytic at $k$ for which $\omega_n(k)$ is a simple root of the characteristic equation \eqref{eq:chara1}. It should also be noted that the labeling of the spectral bands $\{\omega_n(k)\}_{n=1}^\infty$ is not unique. Alternatively, the spectral bands may be defined as the analytic branches of the dispersion relation \eqref{eq:chara1}. We refer to Appendix~\ref{sec:def_spec_bands} for details. 
\end{remark}

The definition of the spectral gap for Hermitian systems can not be extended to non-Hermitian ones, as the eigenvalues are distributed over the complex plane. For non-Hermitian systems, one may introduce two different types of complex-energy gaps, the so-called point gap and line gap \cite{kawabata2019}. 
A non-Hermitian system is said to attain a point gap if its dispersion curve forms a loop that encircles a reference point $\omega_B\in\bbC$ and the crossing the base point defines a gap closing transition. A line gap, on the other hand, is defined by a line over the complex plane that does not intersect with the spectral bands of the system. We refer to Figure \ref{fig:gap_non-Hermtitian} for an illustration of these two types of spectral gaps. 

\begin{figure}[!htbp]
    \centering
    \includegraphics[height=5cm]{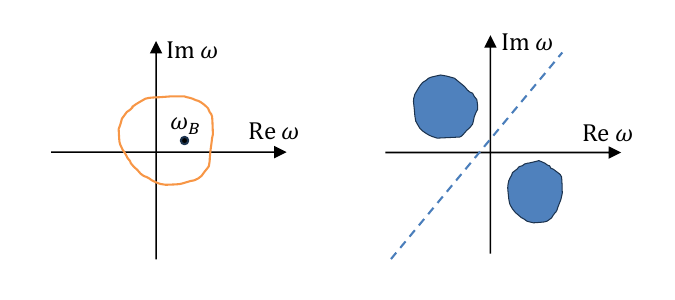}
    \vspace*{-15pt}
    \caption{Point gap (left) and line gap (right) for the spectrum of non-Hermitian operators. }
    \label{fig:gap_non-Hermtitian}.
\end{figure}

Similar to Hermitian photonic crystals, we distinguish two types of non-Hermitian photonic crystals:
\begin{itemize}
    \item [(i)] The spectral reciprocity holds such that $\omega_n(k)=\omega_n(-k)$ for all $n$.
    \item [(ii)] The spectral reciprocity is broken  with $\omega_n(k)\neq\omega_n(-k)$ for some $n$. 
\end{itemize}
Again the spectral reciprocity holds when the symmetry group of the period medium attain an operation $g$ such that $g \bPhi(k,\cdot) =  \bPhi(-k,\cdot)$. To break spectral reciprocity, one needs to exclude such operations in the symmetry group of the periodic media, such as the inversion symmetry and time-reversal symmetry.

A point gap occurs for a non-Hermitian photonic crystal when the spectrum is non-reciprocal, since the spectral symmetry $\omega_n(k)=\omega_n(-k)$ implies that the spectral band $\omega_n(k)$ typically forms a trivial gap with an empty interior region as illustrated in Figure \ref{fig:point_gap} (left).
The spectral non-reciprocity allows the two segments of the dispersion curve, $\{ \omega_n(k); \; k\in [-\pi,0]\}$ and $\{ \omega_n(k); \; k\in [0,\pi] \}$, to not overlap. Furthermore, when
\begin{equation}\label{eq:dis_periodicity}
 \omega_n(-\pi)=\omega_n(\pi),   
\end{equation}
the whole dispersion curve $\gamma_n:=\{ \omega_n(k); \; k\in \cB \}$ forms a closed loop (cf. Figure \ref{fig:point_gap}, right).

To demonstrate the existence of point gaps for non-reciprocal and non-Hermitian photonic crystals, let us revisit the three-layer periodic medium with the permittivity and permeability tensors in the form of \eqref{eq:para_three_layers}. We set all the parameters the same as in Example 3, except for perturbing $\eps_0$ from $13$ to $13+5\i$ for $\beps_A$ such that the system become non-Hermitian. The corresponding band structure is shown Figure \ref{fig:band_three_layer} (Right). It is seen that each of the first five bands forms a closed loop over the complex plane enclosing a non-trivial interior region. As such, each dispersion curve encircles a reference base point $\omega_B$ and the non-Hermitian photonic crystal attains point gaps. \\

\begin{figure}[!htbp]
    \centering
    \includegraphics[height=5cm]{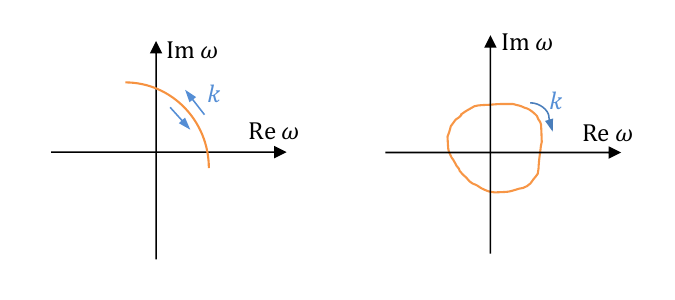}
    \vspace*{-15pt}
    \caption{Spectrum of non-Hermitian operators over the complex plane: a trivial gap when $\omega_n(k)=\omega_n(-k)$ (left) and a non-trivial gap when the spectral symmetry is broken (right).}
    \label{fig:point_gap}.
\end{figure}

\begin{remark}    
In general, the spectral non-reciprocity for one-dimensional photonic crystals can only be achieved when the medium parameter $\beps(z)$ or $\bmu(z)$ is anisotropic with nonzero off-diagonal entries, no matter the system is Hermitian or non-Hermitian. This is because the model for the electromagnetic wave propagation in a one-dimensional isotropic medium is a scalar second-order ODE for the electric or magnetic field.  The corresponding dispersion relation is determined by the nonlinear equation $\mbox{tr}(\cM(\omega)) =  2\cos(k)$ (see, for instance, \cite{lin2022}), which implies that $\omega(k)=\omega(-k)$.
\end{remark}

\medskip

\begin{remark}    
When considering the point gap in this work, we assume throughout that the continuous spectral band $\omega_n(k)$ satisfies $\omega_n(-\pi)=\omega_n(\pi)$ so that the dispersion curve $\gamma_n:=\{\omega_n(k); k\in\mathcal{B}\}$ itself forms a closed loop in the complex plane. We don't consider spectral braiding, where two dispersion curves $\gamma_n$ and $\gamma_m$ together form a closed loop in the complex plane \cite{zhang2023}.
\end{remark}

\section{Spectral topology and skin modes in non-Hermitian photonic crystals}\label{sec:non_Hermitian}
In this section, we investigate the skin modes in semi-infinite photonic crystals for which the dispersion curves for the quasi-periodic problem \eqref{eq:eig_prob1} attain point gaps.
Based on the discussions in the previous section, we consider
non-Hermitian and non-reciprocal photonic crystals such that the spectrum lies on the complex plane and the dispersion curves enclose nontrivial interior regions.
We show that the eigenfrequencies for the skin modes are closely related to the winding directions of the dispersion curves. In addition, the winding number of each dispersion curve can be characterized by a new topological invariant defined in terms of the eigenvalues of the underlying monodromy matrix.

\subsection{Topological index for the spectral gap}\label{sec:top_idx_spec_gap}
Consider a photonic crystal with the medium parameter $(\boldsymbol{\varepsilon}(z), \boldsymbol{\mu}(z))$. Let $\mathbb{T} := \{ e^{\i k}; k \in (-\pi,\pi] \}$ be the Brillouin zone over the complex plane.
For convenience of notation, for each $\xi\in\bbT$, we rewrite the eigenvalue problem \eqref{eq:eig_prob1} as
\begin{subequations}
\label{eq:eig_prob_complex}
\begin{align}
\label{eq:ODE_model_complex}
&  \frac{d}{dz} \bPhi(\xi; z) = \omega(\xi) \cQ^{-1}\cA(z) \bPhi(\xi; z),  \\ 
\label{eq:quasi_bnd_comlex}
& \bPhi(\xi; z+1) = \xi \bPhi(\xi; z).
\end{align}
\end{subequations}
The union of dispersion curves and their complement are denoted by
\[
\Gamma = \bigcup_{n=1}^{\infty} \gamma_{n}
\quad\mbox{and}\quad
\Gamma^c=\mathbb{C}\backslash \Gamma.
\]
If the dispersion curve $\gamma_n$ is a closed loop in the complex plane,
the corresponding dispersion relation may be viewed as a function from $\mathbb{T}$ to
$\bbC$, which, with the abuse of notation, is still denoted by $\omega_n$. In this scenario, we may express the dispersion curve as
\[
\gamma_{n}= \omega_{n}(\mathbb{T}).
\]

For any $\omega\in\mathbb{C}$, let $\lambda_{j}(\omega)$ ($j=1, 2, 3, ,4$) be the eigenvalue branches of the monodromy matrix $\cM(\omega)$ for the periodic medium that are defined in \eqref{eq:root_quartic}. We introduce an index at $\omega$ as
\[
 \Ind(\omega)= \frac{1}{2} \Big(\#\{ i: |\lambda_{i}(\omega)|<1\} - \#\{ i: |\lambda_{i}(\omega)|>1\}\Big).
\]

\begin{remark}\label{rmk:ind}
 In view of Proposition \ref{prop:rel_lambda_omega},  if $\omega\in\Gamma^c$, there holds $|\lambda_{j}(\omega)|\neq 1$ for $j=1, 2, 3, 4$. In addition, since $\prod_{j=1}^{4}\lambda_j(\omega)=1$, at most three eigenvalues lie inside or outside the unit circle. Therefore, $\Ind(\omega)$ can only take the value $0$ or $\pm 1$ for $\omega\in\Gamma^c$.
\end{remark}

\bigskip

The union of dispersion curves $\Gamma$ divides the complex plane into connected components. Let $D$ be one connected component in $\Gamma^c$, we first show that $\Ind(\omega)$ is invariant for $\omega\in D$. Before we prove the statement, we need the following result regarding the perturbation of the eigenvalues for matrices.

Let \( X = \{x_1, x_2, \dots, x_m\} \) and \( Y = \{y_1, y_2, \dots, y_m\} \) be two unordered \(m\)-tuples of complex numbers. 
The optimal matching distance between \( X \) and \( Y \) is defined as
\begin{equation}\label{eq:def_dxy}
    d(X, Y) = \min_{\pi \in S_m} \max_{1 \leq i \leq m} |x_i - y_{\pi(i)}|,
\end{equation}
where \( S_m \) is the symmetric group of degree \( m \), representing all permutations of $\{1, 2, \cdots, m\}$.
\begin{lemma}(\cite{elsner1985}) \label{lem:esner}
    For $A, B \in \bbC^{m\times m}$ with the eigenvalues $\Lambda_A:=\big\{ \lambda_1^A, \lambda_2^A, \cdots, \lambda_m^A\big \}$ and $\Lambda_B:=\big\{ \lambda_1^B, \lambda_2^B, \cdots, \lambda_m^B \big\}$ respectively, there holds
    \[
    d(\Lambda_A, \Lambda_B) \le m \, C^{1-1/m} \| A - B \|^{1/m}, \quad C = 2 \max\{ \|A\|, \|B\| \}.
    \]
\end{lemma}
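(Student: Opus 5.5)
This lemma is the classical result of Elsner~\cite{elsner1985}, which the paper only cites. If I were to prove it, I would split it into an analytic one-sided spread estimate and a combinatorial matching step. Throughout, write $r:=C^{1-1/m}\|A-B\|^{1/m}$ with $C=2\max\{\|A\|,\|B\|\}$.

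\emph{Step 1 (one-sided spread bound).} Let $\mu$ be an eigenvalue of $B$ with unit eigenvector $x$. Complete $x$ to an orthonormal basis $x=u_1,u_2,\dots,u_m$ forming a unitary $U$. Then
\[
\prod_{i=1}^m|\lambda_i^A-\mu| = |\det(A-\mu I)| = |\det((A-\mu I)U)| \le \prod_{j=1}^m \|(A-\mu I)u_j\|
\]
by Hadamard's inequality. For $j=1$ we have $(A-\mu I)x=(A-B)x$, so that factor is at most $\|A-B\|$. For $j\ge2$ each factor is at most $\|A\|+|\mu|\le\|A\|+\|B\|\le C$. Hence $\min_i|\lambda_i^A-\mu|\le (C^{m-1}\|A-B\|)^{1/m}=r$. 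By symmetry, every eigenvalue of $A$ also lies within $r$ of the spectrum of $B$.

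\emph{Step 2 (continuity and component counting).} Consider $A(t)=(1-t)A+tB$ for $t\in[0,1]$. Since $\|A(t)\|\le\max\{\|A\|,\|B\|\}$ and $\|A-A(t)\|\le\|A-B\|$, Step~1 applied to the pair $A, A(t)$ shows that every eigenvalue of every $A(t)$ lies in the union $U_r$ of the closed discs of radius $r$ centred at the $\lambda_i^A$. The eigenvalues of $A(t)$ depend continuously on $t$ as an unordered $m$-tuple. Therefore each connected component of $U_r$ that contains $k$ discs (counted with multiplicity) contains exactly $k$ eigenvalues of $A(t)$ for every $t$, in particular for $t=1$, i.e.\ for $B$. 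Inside such a component, any bijection between the $k$ eigenvalues of $A$ and the $k$ eigenvalues of $B$ has displacement at most the component's diameter.

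\emph{Step 3 (the constant $m$; main obstacle).} This is the delicate step. A chain of $k$ overlapping discs has diameter up to $2kr$, and the crude estimate from Step~2 gives only $(2k-1)r\le(2m-1)r$, not $mr$. To reach the factor $m$, I would use both directions of Step~1 simultaneously and build the bipartite graph on $\Lambda_A\sqcup\Lambda_B$ with an edge whenever $|\lambda^A-\lambda^B|\le r$. Step~2 shows that every connected component of this graph has equally many $A$- and $B$-vertices. A component with $k$ vertices on each side has at most $2k$ vertices, so any two of its vertices are joined by a path of at most $2k-1$ edges. Matching the eigenvalues by their order along a spanning path, or via Hall's theorem on the relation $|x-y|\le kr$, should bound the displacement by $kr\le mr$.

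I expect the main difficulty to lie in the Hall-condition verification of Step~3. The first thing I would try there is to show that for any set $S$ of $A$-vertices in a component, the $B$-vertices within graph distance one of $S$ lie within Euclidean distance $kr$ of $S$, and then count them using the balanced structure of the component obtained in Step~2.
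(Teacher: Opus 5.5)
\paragraph{Verdict: Step~3 has a genuine gap.} Your Steps~1 and~2 are correct, and they are the classical argument. The paper gives no proof of its own, only the citation \cite{elsner1985}. Hadamard's inequality yields the one-sided spread bound, and the homotopy $A(t)=(1-t)A+tB$ with disc counting yields balanced components of $U_r$. However, this only proves
$d(\Lambda_A,\Lambda_B)\le (2m-1)\,(\|A\|+\|B\|)^{1-1/m}\|A-B\|^{1/m}$, which is the classical Ostrowski--Elsner matching constant. Step~3 is meant to lower $2m-1$ to $m$, and it fails as written.

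\paragraph{The balance claim does not follow from Step~2.} Step~3 rests on the assertion that each connected component of your threshold-$r$ bipartite graph has as many $A$-vertices as $B$-vertices. Step~2 only balances components of $U_r$, and one component of $U_r$ can contain several graph components. Two $A$-eigenvalues at distance in $(r,2r]$ have overlapping discs without needing to share a $B$-neighbour. Concretely, on the real line take $\alpha_1=\alpha_2=0$, $\alpha_3=1.7r$, $\beta_1=0.6r$, $\beta_2=\beta_3=1.1r$.
\begin{itemize}
\item Every point lies within $r$ of the other set.
\item The unions of $r$-discs about either set are connected.
\item So everything Steps~1--2 (and their mirror image with $B$ as base point) can tell you is satisfied.
\item Yet the graph splits as $\{\alpha_1,\alpha_2,\beta_1\}\cup\{\alpha_3,\beta_2,\beta_3\}$.
\end{itemize}
The optimal matching here costs $1.1r$, so the lemma itself is not contradicted, only your intermediate claim.

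\paragraph{The matching step also does not reach $m$.} Even granting balance, a path of at most $2k-1$ edges again gives only $(2k-1)r$. A connected bipartite graph need not have a spanning path. The Hall condition at threshold $kr$ is exactly what remains unverified.

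\paragraph{What you have, and what would close the gap.} As written, you have proved the lemma with $m$ replaced by $2m-1$. That weaker form is all the paper actually uses: Lemma~\ref{lem-invariance} and Theorem~\ref{thm:ind_homotopy} only need continuity of the eigenvalue multiset in $d$. The stated constant needs a further ingredient.
\begin{itemize}
\item For $m=2$, a Hall check inside each component of $U_r$ suffices. Every vertex of the threshold-$r$ graph has a neighbour in its own component, and in a two-disc component that already forces a perfect matching.
\item For $m=3$, only a three-disc component needs care, and there the only nontrivial Hall condition at threshold $3r$ is for pairs. If two $B$-eigenvalues were farther than $3r$ from $\{\alpha_1,\alpha_2\}$, both would lie in $\{|z-\alpha_3|\le r\}$. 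That forces $|\alpha_3-\alpha_1|>2r$ and $|\alpha_3-\alpha_2|>2r$, which disconnects the component.
\item For $m\ge 4$, disc counting is not the natural tool. The Chebyshev-polynomial argument of Bhatia, Elsner and Krause gives a matching bound of the form (constant independent of $m$, at most $4$)$\,\times(\|A\|+\|B\|)^{1-1/m}\|A-B\|^{1/m}$. Since $(\|A\|+\|B\|)^{1-1/m}\|A-B\|^{1/m}\le C^{1-1/m}\|A-B\|^{1/m}$, this implies the stated inequality.
\end{itemize}
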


\begin{lemma}\label{lem-invariance}
Let $D$ be a connected component of $\Gamma^c$.  The value of $\Ind(\omega)$ is independent of the choice of $\omega\in D$. 
\end{lemma}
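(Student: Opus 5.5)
The plan is to show that $\Ind$ is locally constant on $\Gamma^c$, and then use connectedness of $D$. Since $D$ is a connected open subset of $\bbC$, a locally constant integer-valued function on $D$ is constant. The argument therefore reduces to a local statement: for each $\omega_0\in D$ there is a neighborhood $U\subset D$ on which the counts $\#\{i:|\lambda_i(\omega)|<1\}$ and $\#\{i:|\lambda_i(\omega)|>1\}$ do not change. These counts are well defined as multisets of roots of \eqref{eq:chara2}, so no choice of analytic branches is needed. This matters because branch points may lie in $D$.

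For the local step, fix $\omega_0\in D$. By Remark \ref{rmk:ind} and Proposition \ref{prop:rel_lambda_omega}, no eigenvalue of $\cM(\omega_0)$ lies on $\bbT$. Set
\[
\delta:=\min_{1\le j\le 4}\big|\,|\lambda_j(\omega_0)|-1\,\big|>0 .
\]
The map $\omega\mapsto\cM(\omega)=\cT_\omega(1,0)$ is continuous, and in fact entire, in $\omega$. This follows from the standard continuous (analytic) dependence of solutions of the linear ODE \eqref{eq:dT_dz} on the parameter, and it holds for piecewise continuous bounded coefficients via the Gronwall inequality. Hence on a closed disk $\overline{B(\omega_0,r)}\subset D$ the norms $\|\cM(\omega)\|$ are uniformly bounded, say by $C/2$, and $\|\cM(\omega)-\cM(\omega_0)\|\to0$ as $\omega\to\omega_0$.

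Apply Lemma \ref{lem:esner} with $m=4$, $A=\cM(\omega_0)$ and $B=\cM(\omega)$. For $\omega$ close enough to $\omega_0$ it gives a permutation $\pi$ with $|\lambda_i(\omega_0)-\lambda_{\pi(i)}(\omega)|<\delta$ for every $i$. If $|\lambda_i(\omega_0)|<1$, then $|\lambda_{\pi(i)}(\omega)|<|\lambda_i(\omega_0)|+\delta\le 1$. If $|\lambda_i(\omega_0)|>1$, then $|\lambda_{\pi(i)}(\omega)|>|\lambda_i(\omega_0)|-\delta\ge1$. Since $\pi$ is a bijection, the number of eigenvalues inside and outside the unit circle, counted with multiplicity, is the same at $\omega$ as at $\omega_0$. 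So $\Ind(\omega)=\Ind(\omega_0)$ on a neighborhood.

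The main point needing care is the continuity of $\cM(\omega)$ in $\omega$ for merely piecewise continuous coefficients. One can handle it by writing $\cM(\omega)$ as a Peano–Baker (Dyson) series, which converges uniformly on compact sets in $\omega$. Alternatively, on each layer one can use the product of matrix exponentials. Once continuity is in hand, the rest is the eigenvalue-matching argument above together with connectedness. An alternative route avoids Lemma \ref{lem:esner}: count roots inside $\bbT$ with the argument principle, $\frac{1}{2\pi \i}\oint_{\bbT}\partial_\lambda P/P\,d\lambda$, where $P$ is the polynomial in \eqref{eq:chara2}. This integral is continuous in $\omega$ on $D$ and integer valued, hence constant.
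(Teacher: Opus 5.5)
Your proof is correct and follows the paper's approach: $\Ind$ is locally constant on $D$ because the eigenvalue multiset of $\cM(\omega)$ depends continuously on $\omega$ in the optimal matching distance (Lemma~\ref{lem:esner}), and $D$ is connected. The paper separates branch points from non-branch points and invokes Lemma~\ref{lem:esner} only at branch points, whereas your single argument with the explicit margin $\delta$ covers both cases and makes precise the paper's informal claim that no eigenvalue crosses $\bbT$ near $\omega_0$.
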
 

\begin{proof}
    
We only need to show that index $\Ind(\omega)$ is locally invariant for $\omega\in D$. Take an arbitrary $\omega_0 \in D$.

\medskip

\noindent\textbf{Case 1}. $\omega_0$ is not a branch point of the four eigenvalue functions $\lambda_{j}(\omega)$ ($j=1, 2, 3, 4$). Then the values $\lambda_{j}(\omega_0)$ are pairwise distinct. Note that each eigenvalue $\lambda_{j}(\omega_0)$ has modulus either strictly less than $1$ or strictly greater than $1$.
By the standard perturbation theory \cite{kato1966}, for $\omega$ sufficiently close to $\omega_0$, the four eigenvalue branches $\lambda_{j}(\omega)$ can be chosen continuously. Hence the quantities $|\lambda_{j}(\omega)|$ vary continuously in the small neighborhood of $\omega$.
As such, since $|\lambda_j(\omega_0)|\neq 1$, the function
$|\lambda_{j}(\omega)| - 1$
has constant sign for all $\omega$ sufficiently close to $\omega_0$. Consequently, the counts
$\#\{ i : |\lambda_{i}(\omega)| < 1 \}$ and $\#\{ i : |\lambda_{i}(\omega)| > 1 \}$ is locally constant near $\omega_0$, so is the index function $\Ind(\omega)$.  \\

\noindent\textbf{Case 2}. $\omega_0$ is a branch point of the four eigenvalue functions $\lambda_j(\omega)$. Then the monodromy matrix $\mathcal{M}(\omega)$ has eigenvalues with algebraic multiplicity strictly greater than one at $\omega_0$. 
Although the individual eigenvalue branch need not admit continuous parameterizations near $\omega_0$, by virtue of Lemma \ref{lem:esner},
the eigenvalue set $\{ \lambda_1(\omega), \lambda_2(\omega), \lambda_3(\omega), \lambda_4(\omega) \} $  depends continuously on $\omega$ with respect to the optimal matching metric $d$.

Recall that for $\omega_0\in D\subset\Gamma^c$, the entire eigenvalue multiset stays at a strictly positive distance from $\bbT$ (cf. Remark \ref{rmk:ind}) and $D$ is a connected component of $\Gamma^c$. We concludes that in a neighborhood of $\omega_0$, no eigenvalue crosses $\bbT$. 
As such
$\#\{ i : |\lambda_i(\omega)| < 1 \}$ and $\#\{ i : |\lambda_i(\omega)| > 1 \}$
are locally constant in $\omega$. This proves the index $\Ind(\omega)$ is locally invariant.

\end{proof}

In view of Lemma \ref{lem-invariance}, for a given connected component $D\subset\Gamma^c$, we define a component index associated with $D$ by
\[
 \Ind(D)=\Ind(\omega), \quad \omega\in D.
\]
It is clear that $\Ind(D)$ is well-defined.

\begin{definition}
 We call that a connected component $D$ of $\Gamma^c$ is a \textbf{principal gapped component} if 
 \[
   \Ind(D)=0. 
\] 
\end{definition}

\subsection{Homotopy invariance of $\Ind(D)$ for the principal gapped component}\label{sec:homotopy}

We view the non-Hermitian photonic crystal $(\boldsymbol{\varepsilon}(z), \boldsymbol{\mu}(z))$ as a continuous deformation from a Hermitian photonic crystal $(\boldsymbol{\varepsilon}_H(z), \boldsymbol{\mu}_H(z))$ whose spectrum is real. More precisely, we introduce two continuous functions
\[
h_{\varepsilon}, h_{\mu} :  \mathbb{R} \times [0,1] \to \mathbb{C}^{2\times 2},
\]
such that for all $z \in \mathbb{R}$,
\[
h_{\varepsilon}(z;0) = \boldsymbol{\varepsilon}_H(z), \qquad
h_{\mu}(z;0) = \boldsymbol{\mu}_H(z), \qquad
h_{\varepsilon}(z;1) = \boldsymbol{\varepsilon}(z), \qquad
h_{\mu}(z;1) = \boldsymbol{\mu}(z).
\]
Throughout, we require that $( h_\eps(\cdot;t), h_\mu(\cdot;t) )$ remains within the same coefficient class. More precisely, for each $t\in[0,1]$, the tensors $h_\eps(\cdot;t)$ and $h_\mu(\cdot;t)$ are $1$-periodic in $z$, bounded, piecewise continuous, retain the transverse block structure, and depend continuously on $t$ in the $L^\infty$ norm. This guarantees that the Floquet--Bloch transform and the monodromy matrix $\cM(\omega;t)$ is well defined for each $t$. Moreover, since the transfer matrix $\cT_{\omega}(z, 0)$ solves the Volterra equation $$\cT_{\omega}(z, 0)=\cI+\int_0^z\omega\cQ^{-1}\cA(s;t)\cT_{\omega}(s,0)\,ds.$$
A Gronwall estimate shows that $t\mapsto\cM(\omega;t)$ is continuous in $(\omega, t)$ over a compact set. Consequently the eigenvalue multiset $\{\lambda_j(\omega;t)\}$ is jointly continuous in $(\omega,t)$ with respect to the optimal matching metric \eqref{eq:def_dxy}.

For each $t \in [0,1]$, the $t$-dependent band diagram and its complement are given by
\begin{equation}\label{eq:Gamma_t}
    \Gamma_t = \bigcup_{n=1}^{\infty} \gamma_{n,t}
\quad\mbox{and}\quad
\Gamma_t^c=\mathbb{C}\backslash \Gamma_t.
\end{equation}

Let $\lambda_{j}(\omega; t)$ ($j=1, 2, 3, 4$) be the eigenvalue branches of the monodromy matrix $\cM(\omega;t)$ for the periodic medium $(h_\eps(z;t), h_\mu(z;t))$ that are defined in \eqref{eq:root_quartic}. For each connected component $D_t\subset \Gamma_t^c$, we denote the t-dependent index component by
\[
 \Ind(D_t; t):=\frac{1}{2} \Big(\#\{ i: |\lambda_{i}(\omega; t)|<1\} - \#\{ i: |\lambda_{i}(\omega; t)|>1\}\Big), \quad \omega\in D_t.
\]

We first show that $\Gamma_0^c$ is a principal gapped component for the Hermitian medium $(\beps_H(z), \bmu_H(z))$ under the following assumption. 

\begin{assumption}\label{assum1}
The spectrum of the Hermitian photonic crystal $(\beps_H(z), \bmu_H(z))$ contains a band gap. As a result, the set $\Gamma_0^c$ is a connected and unbounded domain in the complex plane.
\end{assumption}

\begin{thm} \label{lem-0}
Under Assumption \ref{assum1}, $\Gamma_0^c$ is principal gapped component, i.e., 
$\Ind(\Gamma_0^c; 0)=0$. 
\end{thm}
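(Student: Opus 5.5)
By Lemma \ref{lem-invariance}, $\Ind(\omega;0)$ is constant on the connected set $\Gamma_0^c$. It therefore suffices to exhibit a single point $\omega\in\Gamma_0^c$ at which exactly two eigenvalues of $\cM(\omega;0)$ lie inside the unit circle and two lie outside. The natural choice is a real frequency $\omega_g$ in the band gap guaranteed by Assumption \ref{assum1}. Such a point lies in $\Gamma_0^c$, since the spectrum of the Hermitian medium is real and $\omega_g$ is not in any band, so no $\lambda_j(\omega_g;0)$ lies on $\bbT$ by Proposition \ref{prop:rel_lambda_omega}.

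At a real frequency we invoke Lemma \ref{lem:flux}(ii). Because $\beps_H,\bmu_H$ are Hermitian and $\omega_g\in\bbR$, the spectrum of $\cM(\omega_g;0)$ is invariant under the involution $\lambda\mapsto 1/\bar\lambda$, with algebraic multiplicities preserved. This map sends the open unit disk bijectively onto the exterior of the closed disk. Since no eigenvalue lies on $\bbT$, it has no fixed eigenvalues, so the eigenvalues pair off, counted with multiplicity, into pairs $\{\lambda,1/\bar\lambda\}$ with one member inside and one outside the unit circle. Hence $\#\{|\lambda_i|<1\}=\#\{|\lambda_i|>1\}=2$, and $\Ind(\omega_g;0)=0$.

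The only delicate point is the multiplicity bookkeeping. If $\lambda$ has algebraic multiplicity $m$, then $1/\bar\lambda\neq\lambda$ also has multiplicity $m$. Grouping the multiset by distinct values gives equal counts inside and outside, and Lemma \ref{lem:flux}(ii) already supplies the equality of multiplicities, so nothing further is needed. A secondary check is that a real gap frequency genuinely lies off $\Gamma_0$. This holds because for the Hermitian medium every $\gamma_{n,0}$ is a real band interval and the gap is by definition disjoint from their union; a negative real gap, or the symmetric gap containing $0$, would work equally well if one prefers it.

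Finally, we note that the connectedness of $\Gamma_0^c$ asserted in Assumption \ref{assum1} is what allows the conclusion at the single point $\omega_g$ to be extended to the whole complement. Combined with Lemma \ref{lem-invariance}, this gives $\Ind(\Gamma_0^c;0)=0$.
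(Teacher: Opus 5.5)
Your argument is correct and is essentially the paper's proof. A real gap frequency has no eigenvalue of $\cM$ on $\bbT$ (Proposition~\ref{prop:rel_lambda_omega}); Lemma~\ref{lem:flux}(ii) pairs the eigenvalues as $\{\lambda,1/\bar\lambda\}$, so two lie inside and two outside the unit circle; and Lemma~\ref{lem-invariance}, with the connectedness of $\Gamma_0^c$, extends $\Ind=0$ to the whole component.

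One aside should be dropped: there is no gap containing $\omega=0$, since $\cM(0)=\cI$ has all four eigenvalues equal to $1\in\bbT$, so $0$ always lies on $\Gamma_0$.
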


\begin{proof}
By Assumption \ref{assum1}, there exists a real frequency $\omega_*$ lying in a spectral gap of the Hermitian system. Since $\Gamma_0^c$ is connected and the index is constant on each connected component by Lemma \ref{lem-invariance}, it suffices to compute $\Ind(\omega_*;0)$.

From Proposition \ref{prop:rel_lambda_omega}, no eigenvalue of $\cM(\omega_*;0)$ lies on the unit circle $\bbT$, i.e., $|\lambda_j(\omega_*;0)|\neq 1$ for $j=1, 2, 3, 4$. Furthermore, by virtue of Lemma \ref{lem:flux}, the eigenvalues occur in reciprocal-conjugate pairs $\{\lambda,1/\bar\lambda\}$. Namely, one of this pair lies inside $\bbT$ and the other lies outside $\bbT$. 
Hence, exactly two eigenvalues of $\cM(\omega_*;0)$ satisfy $|\lambda(\omega_*;0)|<1$ and exactly two satisfy $|\lambda(\omega_*;0)|>1$. Therefore, we conclude that $\Ind(\Gamma_0^c;0)=0$.

\end{proof}

\begin{thm} \label{thm:ind_homotopy}
Consider the periodic media $\big(h_\eps(\cdot,t), h_\mu(\cdot,t)\big)$ with $\big(h_\eps(\cdot,0), h_\mu(\cdot,0)\big)=(\beps_H(z), \bmu_H(z))$ and $\big(h_\eps(\cdot,1), h_\mu(\cdot,1)\big)=(\beps(z), \bmu(z))$. Let $\Gamma_t$ and $\Gamma_t^c$ be the union of the corresponding dispersion curves and their complement.
Given a connected component $D\subset\Gamma_1^c$. If Assumption \ref{assum1} holds for $\big(h_\eps(\cdot,0), h_\mu(\cdot,0)\big)$, and there exists a continuous curve $g: [0, 1] \to \mathbb{C}$ such that 
\begin{itemize}
\item[(i)] 
For each $0\leq t \leq 1$, $g(t) \in \Gamma_t^c $.

\item[(ii)]
$g(1) \in D \subset\Gamma_1^c$;
\end{itemize}
Then $D$ is a principal gapped component, i.e. 
$Ind(D; 1)=0$. 
  
\end{thm}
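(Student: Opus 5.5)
The plan is to follow the index along the curve $g$. Define $f(t):=\Ind(g(t);t)$ for $t\in[0,1]$. This is well defined because $g(t)\in\Gamma_t^c$, so no eigenvalue $\lambda_j(g(t);t)$ of $\cM(g(t);t)$ lies on $\bbT$ (Proposition \ref{prop:rel_lambda_omega} applied to the medium at time $t$). It takes values in $\{0,\pm1\}$ by Remark \ref{rmk:ind}. I will show that $f$ is locally constant on $[0,1]$. Since $[0,1]$ is connected, $f$ is then constant, and so $\Ind(D;1)=f(1)=f(0)$.

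The endpoint values are handled by earlier results. At $t=0$, $g(0)\in\Gamma_0^c$, which is connected by Assumption \ref{assum1}. Theorem \ref{lem-0} therefore gives $f(0)=\Ind(\Gamma_0^c;0)=0$. At $t=1$, $g(1)\in D$, and Lemma \ref{lem-invariance} gives $f(1)=\Ind(D;1)$. Combining these yields the conclusion.

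The key step is local constancy of $f$, and I would argue it as in Case 2 of Lemma \ref{lem-invariance}, now in the joint variable $(\omega,t)$. Fix $t_0\in[0,1]$ and set $\omega_0=g(t_0)$. Let $\delta:=\min_j\big||\lambda_j(\omega_0;t_0)|-1\big|>0$. The map $(\omega,t)\mapsto\cM(\omega;t)$ is jointly continuous; this follows from the Volterra/Gronwall argument preceding the theorem, using $L^\infty$-continuity of $h_\eps,h_\mu$ in $t$. Lemma \ref{lem:esner} then makes the eigenvalue multiset jointly continuous in the optimal matching metric $d$. Because $g$ is continuous, there is $\eta>0$ such that $|t-t_0|<\eta$ implies
\[
d\big(\{\lambda_j(g(t);t)\},\{\lambda_j(\omega_0;t_0)\}\big)<\delta .
\]
Under the optimal matching, each eigenvalue at $(g(t),t)$ lies within $\delta$ of an eigenvalue at $(\omega_0,t_0)$. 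It therefore stays on the same side of $\bbT$, since a disc of radius $\delta$ about $\lambda_j(\omega_0;t_0)$ does not meet $\bbT$. Hence the counts $\#\{|\lambda|<1\}$ and $\#\{|\lambda|>1\}$ coincide at $t$ and $t_0$, so $f(t)=f(t_0)$.

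The main obstacle is justifying joint continuity of $\cM(\omega;t)$ in $(\omega,t)$ on compact sets, together with uniform boundedness of $\|\cM\|$ so that the constant $C$ in Lemma \ref{lem:esner} is controlled. For this I would write the difference of two Volterra equations, estimate
\[
\|\cT_\omega(z,0;t)-\cT_{\omega'}(z,0;t')\|\lesssim |\omega-\omega'|+|\omega|\,\|\cA(\cdot;t)-\cA(\cdot;t')\|_{L^\infty},
\]
and close the estimate with Gronwall on $z\in[0,1]$, using the uniform bound on $\|\cA(\cdot;t)\|_{L^\infty}$ available by compactness of $[0,1]$. Joint continuity in $(\omega,t)$ is not only needed at fixed $\omega$, because $\omega=g(t)$ moves with $t$.
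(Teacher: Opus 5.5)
Your proposal is correct and follows essentially the same route as the paper: you track the index along $t\mapsto(g(t),t)$, get the value $0$ at $t=0$ from Theorem \ref{lem-0}, and prove local constancy from the joint continuity of $\cM(\omega;t)$ combined with Lemma \ref{lem:esner}. Your explicit $\delta$-disc argument under the optimal matching, together with the Volterra/Gronwall estimate in the joint variable $(\omega,t)$, fills in details the paper only sketches, and it handles branch points uniformly, so the paper's separate remark about non-branch points is not needed.
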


\begin{proof}

For each $0\leq t \leq 1$, we define a component index function
\[
G(t)=\frac{1}{2} \Big(\#\{ i: |\lambda_{i}(g(t);t)|<1\} - \#\{ i: |\lambda_{i}(g(t); t)|>1\}\Big). 
\]
By Theorem \ref{lem-0}, we have $G(0)=0$; thus, to show that $G$ vanishes identically, it suffices to establish that it is locally constant on $[0, 1]$. 
To this end, we fix $t_0 \in [0,1]$ and show that $G(t)$ is constant for $t$ sufficiently close to $t_0$.

First, since $g(t)\in\Gamma_t^c$, in view of Proposition \ref{prop:rel_lambda_omega}, $|\lambda_{j}(g(t_0); t_0)|<1$ or $|\lambda_{j}(g(t_0); t_0)|>1$ holds for any $j\in\{1, 2, 3, 4\}$.
Using the perturbation theory in Lemma \ref{lem:esner} for the monodromy matrix $\mathcal{M}(g(t);t)$ and the continuous deformation of the medium parameters $(h_\eps(\cdot;t), h_\mu(\cdot;t))$, the eigenvalue set $\{\lambda_{j}(g(t);t): j=1, 2, 3, 4\}$ is continuously in $t$ with respect to the optimal matching metric $d$ when $t$ is sufficiently close to $t_0$.
In particular, when $g(t_0)$ is not a branch point of $\{\lambda_{j}(g(t);t)$, each of the four eigenvalue branches $\lambda_{j}(g(t);t)$ is continuous for $t$ near $t_0$.  Therefore, the counts
$\#\{ i : |\lambda_{i}(g(t);t)| < 1 \}$ and $\#\{ i : |\lambda_{i}(g(t);t)| > 1 \}$
are locally constant in $t$, so is $G(t)$. This completes the proof of the lemma.

\end{proof}

\begin{remark}
The function $\omega:=g(t)$ in the above theorem describes the evolution of the frequency $\omega$ on the complex plane, starting from the initial value $g(0)=\omega_*$. Assumptions~(i) and~(ii) require that, throughout the continuous deformation of the medium parameters $(h_\varepsilon(\cdot;t), h_\mu(\cdot;t))$ for $t\in[0,1]$, the trajectory $\omega=g(t)$ remains in the complement of the dispersion curves $\Gamma_t$.
\end{remark}

\begin{figure}[!htbp]
    \centering
    \vspace*{-15pt}
    \includegraphics[height=5cm]{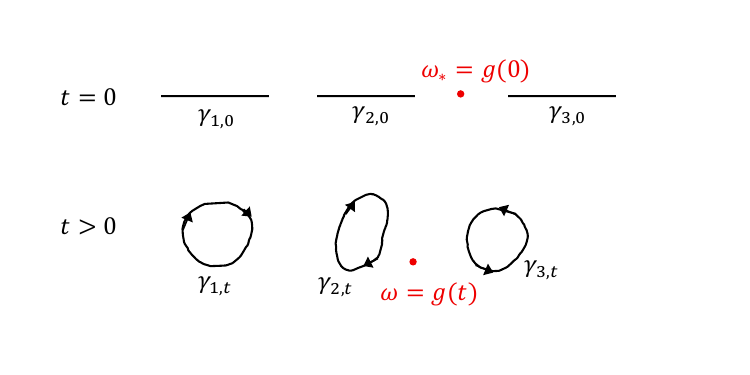}
    \vspace*{-25pt}
    \caption{The evolution of the frequency $\omega:=g(t)$ on the complex plane, starting from the initial value $g(0)=\omega_*$. }
    \label{fig:gt}.
\end{figure}

\subsection{Jump of $\Ind(D)$ across the dispersion curve}\label{sec:jump_ind}

The dispersion curves $\Gamma_1$ for the periodic medium $(\boldsymbol{\varepsilon}(z), \boldsymbol{\mu}(z))$ divide the whole complex plane into connected components (cf. Figure \ref{fig:dispersion_curve}). Let us consider a closed dispersion curve $\gamma_{n}$, which is the image of the $n$-th continuous dispersion function $\omega_{n}: \mathbb{T} \to \mathbb{C}$.  $\gamma_n$ partitions the complex plane into a set of disjoint connected components. 
Among them, there exists one unique unbounded connected component $D_{n}^{e}$ and a finite number of bounded components $D_n^i$ ($i=1, 2, \cdots, i_0$). We assume that each \(D_n^i\) is a \textbf{Jordan domain} bounded by a \textbf{Jordan curve} \(\gamma_n^i \subset \gamma_n\) that is smooth everywhere except at finitely many corners. In addition, for $m\neq n$, $\gamma_m^i\neq\gamma_n^i$ as sets.
For a given smooth point $\zeta \in \gamma_n^i$, we define the \textbf{outward unit normal} $\nu(\zeta)$ as the one that points into the unbounded domain $D_n^e$. 


\begin{figure}[!htbp]
    \centering
    \vspace*{-15pt}
    \includegraphics[height=5cm]{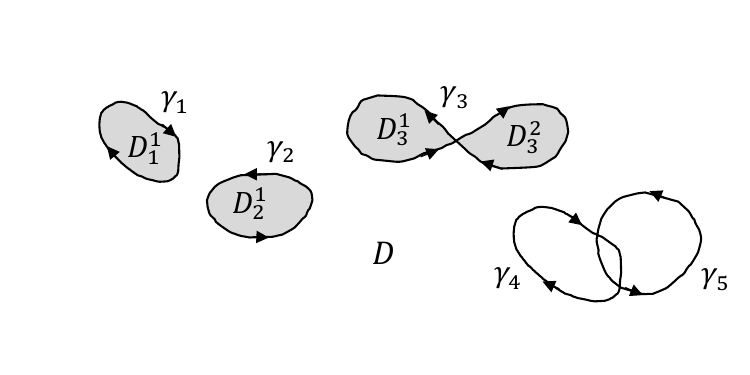}
    \vspace*{-25pt}
    \caption{A schematic plot of dispersion curves $\gamma_1$, $\gamma_2$, $\gamma_3$, $\cdots$ for the non-Hermitian photonic crystal. Each dispersion curve $\gamma_n$ forms a finite number of bounded components $D_n^i$ ($i=1, 2, \cdots, i_0$) over the complex plane. $D \subset \Gamma_1^c$ is a connected component in the complement of the dispersion curves.  }
    \label{fig:dispersion_curve}.
\end{figure}

For the simple closed curve $\gamma_n^i$, we say that it is \textbf{oriented counterclockwise} (or positively oriented) if the mapping $\omega_{n}$ preserves the counterclockwise orientation of the unit circle $\mathbb{T}$ for $\omega_n(\xi)\in \gamma_n^i$. More precisely,
let $\xi_0 = (\cos(k_0), \sin(k_0))\in \mathbb{T}$ be such that $\omega_{n}(\xi_0)$ is a smooth point of $\gamma_n^i$. Let $\tau=(-\sin(k_0), \cos(k_0))$ be the unit tangent vector to $\mathbb{T}$ at $\xi_0$, and
$\omega_{n}'(\xi_0)v$ be the tangent vector to the curve $\gamma_n^i$ at the point $\zeta_0 = \omega_{n}(\xi_0)$.
The curve $\gamma_n^i$ is oriented counterclockwise if the outward normal $\nu(\zeta_0)$ points in the direction of rotating the tangent vector $\omega_{n}'(\xi_0)v$ clockwise by $\pi/2$. Otherwise, the curve $\gamma_n^i$ is said to be oriented clockwise (or negatively oriented).

Let $D$ be a connected component in $\Gamma_1^c$. For instance, one may set $\displaystyle{D=\bigcap_{n=1}^\infty D_n^{e}}$ if the intersection of all $D_n^{e}$ is connected. Let the assumptions in Theorem \ref{thm:ind_homotopy} hold, namely, the periodic medium $(\boldsymbol{\varepsilon}(z), \boldsymbol{\mu}(z))$ can be deformed continuously from a Hermitian medium with a spectral gap and there exists a continuous function $g(t)$ such that the trajectory $\omega=g(t)$ remains in the complement of the dispersion curves $\Gamma_t$ and $g(1)\in D$. Then
\begin{equation}\label{eq:D}
    D \; \mbox{is a principal gapped component with} \; \Ind(D)=0. 
\end{equation}

In this subsection, we study the topological index of a bounded connected component $D_n^i$ that is adjacent to the principal gapped component $D$. Our main result is that the topological index attains a jump when crossing the dispersion curve $\gamma_n$ from $D$ to $D_n^i$. 
We first give the winding number of the dispersion curve $\gamma_n$.

\begin{thm} \label{thm-winding}
Assume that $\gamma_n=\omega_n(\bbT)$ is a closed dispersion curve and does not intersect with other dispersion curves $\gamma_m$ for $m\neq n$. Further assume that $\omega_n(\cdot)$ can be extended analytical near $\bbT$. 
Let $D_n^i$ be a Jordan domain partitioned by $\gamma_n$, with the boundary given by the simple closed curve $\gamma_n^i$.
If $\gamma_n^i$ is positively oriented, then the winding number 
\begin{equation*}
  W(\gamma_{n}; \omega_B) = \frac{1}{2\pi \i} \int_{\xi\in\bbT}   \frac{d\omega_n(\xi)}{\omega_n(\xi)-\omega_B}=1
\end{equation*}
for $\omega_B \in D_n^i$. 
If $\gamma_n^i$ is negatively oriented, then the winding number $W(\gamma_{n}, \omega_B)=-1$ for $\omega_B \in D_n^i$.
\end{thm}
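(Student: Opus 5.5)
The plan is to compute $W(\gamma_n;\omega_B)$ by comparing it with its value in the unbounded component $D_n^e$ and tracking the jump across a single smooth point of $\gamma_n^i$. Two standard facts drive the argument:
\begin{itemize}
\item[(a)] The map $\omega_B\mapsto W(\gamma_n;\omega_B)$ is integer valued and locally constant on $\bbC\setminus\gamma_n$. It vanishes on the unbounded component $D_n^e$, since for $|\omega_B|$ large the integrand is uniformly small.
\item[(b)] Across a smooth point of $\gamma_n$ that is traversed exactly once by $\omega_n$, the winding number jumps by exactly one. It increases by $+1$ when passing from the right side to the left side of the curve, with respect to the direction of traversal induced by the counterclockwise orientation of $\bbT$.
\end{itemize}
Granting these, the orientation convention in the statement finishes the proof. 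Suppose $\gamma_n^i$ is positively oriented. Then the outward normal $\nu(\zeta_0)$, which points into $D_n^e$, is the tangent $\omega_n'(\xi_0)\tau$ rotated clockwise by $\pi/2$. So $D_n^e$ lies on the right of the direction of travel and $D_n^i$ on the left, which gives
\[
W(\gamma_n;\omega_B)=0+1=1 \quad\text{for } \omega_B\in D_n^i.
\]
In the negatively oriented case the sides are swapped and we get $-1$. Since $D_n^i$ is connected, fact (a) shows that one computation at a point near $\zeta_0$ suffices for all of $D_n^i$.

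To establish (b), first choose the base point $\xi_0\in\bbT$ carefully. Because $\omega_n$ extends analytically near $\bbT$ and is nonconstant, $\omega_n'$ has only finitely many zeros on $\bbT$. The corners of $\gamma_n^i$ are also finite in number. I would therefore pick $\zeta_0=\omega_n(\xi_0)\in\gamma_n^i$ to be a smooth point with $\omega_n'(\xi_0)\neq0$, at which the normal $\nu(\zeta_0)$ points into $D_n^e$ (this is the configuration in which the orientation is defined). I would also require that $\omega_n^{-1}(\zeta_0)=\{\xi_0\}$. Next, split $\bbT$ into a small arc $I$ around $\xi_0$ and its complement. The contribution of $\bbT\setminus I$ to the integral is continuous, in fact holomorphic, in $\omega_B$ near $\zeta_0$, because $\omega_n(\bbT\setminus I)$ stays a positive distance from $\zeta_0$; this uses the single-preimage property together with compactness. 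On $I$, $\omega_n$ is a conformal parametrization of a small analytic arc. Closing this arc by a short segment and applying the argument principle (equivalently, the Sokhotski--Plemelj jump of a Cauchy integral with density $1$) shows that the arc contribution jumps by exactly $2\pi\i$ between the two sides, with the stated sign.

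The main obstacle is guaranteeing that a point $\zeta_0$ of $\gamma_n^i$ adjacent to $D_n^e$ has a single preimage. The Jordan-curve hypothesis constrains $\gamma_n^i$ as a set, but $\omega_n$ could in principle retrace part of it. My first approach would be to argue that the preimage $\omega_n^{-1}(\zeta_0)$ of an analytic arc is finite, so that the jump equals a signed count of the passes. Then I would observe that the orientation definition in the paper implicitly assumes that $\gamma_n^i$ is traced once, consistently with $\bbT$; I would make this explicit as the meaning of ``simple closed curve $\gamma_n^i$ oriented by $\omega_n$''. If retracing does occur, the argument still yields $W$ equal to the signed number of passes, which reduces to $\pm1$ under that convention.
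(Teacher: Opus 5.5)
Your argument is correct once the single-pass convention you state is adopted, but it follows a different route from the paper.

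The paper argues globally. If $\omega_n$ is injective, it applies the Jordan curve theorem directly. Otherwise it splits $\gamma_n$ into Jordan sub-loops $\gamma_n^j$ and uses additivity, $W(\gamma_n;\omega_B)=\sum_j W(\gamma_n^j;\omega_B)$. In that sum $\gamma_n^i$ contributes $1$, and every other loop contributes $0$ because $\omega_B$ lies outside it.

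You argue locally. $W$ vanishes on $D_n^e$, and it jumps by exactly $\pm1$ across a once-traversed regular arc of $\gamma_n^i$ bordering $D_n^e$; the sign comes from the orientation convention.

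The paper's version is shorter and handles all petals at once. However, its bookkeeping tacitly requires the sub-loops to have pairwise disjoint interiors and each to be traced exactly once with the induced orientation, and neither is justified. Your version needs only one good boundary arc and uses nothing about the global shape of $\gamma_n$. It is also the exact analogue of the eigenvalue-crossing argument in Theorem \ref{thm:Ind_Dn}, so it makes $\Ind(D_n^i;1)=W(\gamma_n;\omega_B)$ transparent: both quantities vanish on the same outer side of the same arc and jump by the same amount across it.

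The two hypotheses you isolate are genuinely needed, so flagging them strengthens the proof rather than exposing a gap.
\begin{itemize}
\item Single pass: for $\omega_n(\xi)=\xi^2$, the image $\gamma_n=\bbT$ is a Jordan curve and $\omega_n$ preserves orientation everywhere. Yet $W=2$ on the unit disk, and the paper's proof would wrongly return $1$.
\item Adjacency to $D_n^e$: for a lima\c{c}on with an inner loop, the inner loop bounds a Jordan domain with $W=2$ that does not border $D_n^e$. This is exactly the case your adjacency requirement rules out, and that requirement is already implicit in the paper's definition of the outward normal.
\end{itemize}
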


\begin{proof}
This is the consequence of the standard argument principle together with the additivity of winding numbers and the orientation convention. We give the proof when $\gamma_n^i$ is positively oriented, and the proof is similar when $\gamma_n^i$ is negatively oriented. If $\omega_n(\xi)$ is injective for $\xi\in\bbT$,
then $\gamma_n$ is a simple closed curve on the complex plane and there holds $\gamma_n=\gamma_n^1$. By the Jordan Curve Theorem, $\gamma_n^1$ divides the complex plane into a bounded component $D_n^1$ and an unbounded component $\bbC\backslash \overline{D_n^1}$. Since the mapping $\omega_n(z)$ is orientation preserving, the winding number $W(\gamma_n; \omega_B)=W(\gamma_n^1; \omega_B)=1$ for any $\omega_B\in D_n^i$. 

If $\omega_n(\xi)$ is not injective for $\xi\in\bbT$,  then $\gamma_n=\omega_n(\bbT)$ can be decomposed as a finite union of Jordan curves $\gamma_n=\cup_{j=1}^{i_0}\gamma_n^j$ with each Jordan curve $\gamma_n^j$ enclosing a domain $D_n^j$. For any $\omega_B$ in $D_n^i$, it is clear that the winding number  $W(\gamma_n^i; \omega_B)=1$ and $W(\gamma_n^j; \omega_B)=0$ for $j\neq i$, since $\omega_B\in D_n^i$ lies in the exterior of the domain enclosed by $\gamma_n^j$. Therefore, $W(\gamma_{n}, \omega_B)= \sum_{j=1}^{i_0} W(\gamma_n^j; \omega_B)=1$. 

\end{proof}

Next, we investigate the topological index of the bounded component $D_n^i$.

\begin{thm} \label{thm:Ind_Dn}
 Let $D_n^i$ be a Jordan domain partitioned by $\gamma_n$ as in Theorem \eqref{thm-winding}, with the boundary given by the simple closed curve $\gamma_n^i$, and $D\subset \Gamma_1^c$ be a principle gapped component that satisfies \eqref{eq:D}.
Assume that
\begin{itemize}
    \item [(i)] There exists $\xi_0\in \bbT$ such that $\zeta_0=\omega_n(\xi_0)\in\gamma_n^i$ is a smooth point of $\gamma_n$ with a well-defined outward unit normal vector $\nu$. Additionally, $\zeta_0 + t \nu \in D$ and
$\zeta_0 - t \nu \in D_n^i$ for sufficiently small $t>0$.
    \item [(ii)] The function $\omega_n(\xi)$ defined via \eqref{eq:eig_prob_complex} is analytic at $\xi_0$. 
 
\end{itemize}
If $\gamma_n^i$ is positively oriented, then 
    \[
    \Ind(D_n^i; 1) =  W(\gamma_{n}; \omega_B) = 1 \quad \mbox{for} \; \omega_B \in D_n^i.
    \]
Otherwise, 
\[
    \Ind(D_n^i; 1) =  W(\gamma_{n}; \omega_B) = -1.
\]
\end{thm}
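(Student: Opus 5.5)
Fix $\xi_0$ and $\zeta_0=\omega_n(\xi_0)$ as in (i), and track the eigenvalue of $\cM(\omega)$ that equals $\xi_0$ at $\omega=\zeta_0$ as $\omega$ crosses $\gamma_n$ along the normal segment $\zeta_0+t\nu$, $t\in(-t_0,t_0)$. Since $\Ind(D;1)=0$ by \eqref{eq:D}, the claim reduces to two facts. First, exactly one eigenvalue of $\cM(\omega)$ crosses $\bbT$ along this segment. Second, it crosses outward or inward according to the orientation of $\gamma_n^i$. One eigenvalue moving from $|\lambda|<1$ to $|\lambda|>1$ changes $\Ind$ by $-1$; moving the other way changes it by $+1$. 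By Lemma \ref{lem-invariance} we may evaluate $\Ind(D_n^i;1)$ at $\zeta_0-t\nu$ for small $t$, and then compare with the sign of $W(\gamma_n;\omega_B)$ from Theorem \ref{thm-winding}.

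\textbf{Step 1: local inversion.} By (ii), $\omega_n$ is analytic at $\xi_0$, and smoothness of $\gamma_n$ at $\zeta_0$ gives $\omega_n'(\xi_0)\neq0$. After possibly perturbing $\xi_0$ slightly along the smooth arc, we may also assume $\zeta_0$ lies on no other $\gamma_m$ and is not a self-intersection point of $\gamma_n$. Hence $\omega_n$ is a local biholomorphism from a neighborhood $U$ of $\xi_0$ onto a neighborhood $V$ of $\zeta_0$, with inverse $\lambda(\omega):=\omega_n^{-1}(\omega)$. We must show that $\lambda(\omega)$ is an eigenvalue of $\cM(\omega)$ for $\omega\in V$, not only for $\xi\in\bbT$. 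The function $\det(\cM(\omega_n(\xi))-\xi I)$ is analytic in $\xi$ near $\xi_0$ and vanishes on an arc of $\bbT$, so by the identity theorem it vanishes on $U$. Therefore $\lambda$ is an analytic eigenvalue branch of $\cM(\omega)$ on $V$.

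We also need the other three eigenvalues of $\cM(\zeta_0)$ to stay off $\bbT$. Suppose one of them, $\lambda'$, had $\lambda'=e^{\i k'}$. Then $\zeta_0\in\Omega(k')$ by Proposition \ref{prop:rel_lambda_omega}, so $\zeta_0$ would lie on a second band or on a second sheet of $\gamma_n$. This is excluded by the choice of $\zeta_0$, provided $\xi_0$ is a simple root; simplicity follows from $\omega_n'(\xi_0)\neq 0$ and the genericity choice. By Lemma \ref{lem:esner}, for $\omega\in V$ (shrinking $V$ if needed) these three eigenvalues keep their side of $\bbT$, so only $\lambda(\omega)$ can change the count.

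\textbf{Step 2: direction of crossing.} Write $\omega=\zeta_0+t\nu$. Then $\lambda(\omega)=\xi_0+t\nu/\omega_n'(\xi_0)+O(t^2)$, so
\[
\frac{d}{dt}|\lambda|^2\Big|_{t=0}=2\,\Re\big(\overline{\xi_0}\,\nu/\omega_n'(\xi_0)\big).
\]
Let $T=\omega_n'(\xi_0)\,\i\xi_0$ be the image tangent vector, where $\i\xi_0$ is the counterclockwise tangent $\tau$. Then $\overline{\xi_0}\nu/\omega_n'(\xi_0)=\i\,\nu/T$. Positive orientation means $\nu=-\i\,T/|T|\cdot$(positive), that is, $\nu$ is $T$ rotated clockwise by $\pi/2$. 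This gives $\i\nu/T=1/|T|>0$, so $|\lambda|$ increases as $t$ increases, i.e. in the direction from $D_n^i$ into $D$.

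Consequently, on the $D_n^i$ side ($t<0$) we have $|\lambda|<1$, and on the $D$ side $|\lambda|>1$. Comparing the two sides,
\[
\Ind(D_n^i;1)=\Ind(D;1)+\tfrac12(1-(-1))=1 .
\]
For negative orientation the sign flips and $\Ind(D_n^i;1)=-1$. Theorem \ref{thm-winding} then identifies these values with $W(\gamma_n;\omega_B)$.

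\textbf{Main obstacle.} The delicate point is Step 1: guaranteeing that exactly one eigenvalue touches $\bbT$ at $\zeta_0$, which requires $\xi_0$ to be a simple root. I would handle this by moving $\xi_0$ along the smooth arc, using that branch points and intersections with other curves are isolated (finitely many corners, and $\gamma_m^i\neq\gamma_n^i$). If this is not possible, I would fall back on the count $\omega_n'(\xi_0)\neq0$ together with the Puiseux expansion, which shows that a multiple root would force $\omega_n'(\xi_0)=0$.
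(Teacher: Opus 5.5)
Your proposal is correct and follows the paper's route. You invert $\omega_n$ conformally near $\xi_0$ and identify the local inverse as the eigenvalue branch of $\cM(\omega)$ lying outside $\bbT$ on the $D$ side and inside $\bbT$ on the $D_n^i$ side, according to the orientation of $\gamma_n^i$; since the other three eigenvalues stay off $\bbT$, $\Ind$ jumps from $0$ to $\pm1$. Your identity-theorem argument that the inverse is an eigenvalue off $\bbT$, the explicit sign of $\frac{d}{dt}|\lambda|^2$, and the simplicity/non-intersection argument for the remaining eigenvalues supply details the paper only asserts (it appeals to ``uniqueness of the inverse map'').
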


\begin{remark}
A sufficient condition for Assumption (ii) in the theorem above is that $(\zeta_0,\xi_0)$ is a simple zero of $P(\omega,\xi):=\det(\cM(\omega)-\xi I)$. The corresponding eigenvalue is simple and separated from the others. The analyticity of $\omega_n(\xi)$ then follows from the analytic implicit function theorem and the perturbation theory \cite{kato1966}.
\end{remark}

\begin{proof}

We assume that $\gamma_n^i$ is positively oriented. \\
\emph{Step 1}. Since $\omega_n(\xi)$ is analytic at $\xi_0$, we may expand $\omega_n(\xi)$ as
\begin{equation*}
    \omega_n(\xi) = \omega_n(\xi_0) + \sum_{m=1}^\infty \alpha_n (\xi-\xi_0)^m.
\end{equation*}
Note that $\zeta_0=\omega_n(\xi_0)$ is a smooth point of $\gamma_n$, we deduce that $\omega_n'(\xi_0) \neq 0$. As such $\omega_n(\xi)$ is a conformal mapping at $\xi_0$ and $\omega_n(\xi)$ is locally injective at $\xi_0$.

Let $\tau=(-\sin(k_0), \cos(k_0))$ and $\nu=(\cos(k_0), \sin(k_0))$ be the unit tangent and normal vectors to $\mathbb{T}$ at $\xi_0$ respectively. Then
 at $\zeta_0=\omega_n(\xi_0)\in\gamma_n^i$, the normal vector $\omega_n'(\xi_0) \cdot \nu$ to the curve $\gamma_n$ is obtained by rotating the tangent vector $\omega_n'(\xi_0) \cdot \tau$ clockwise by $\pi/2$.  Since $\gamma_n^i$ is positively oriented, or the mapping $\omega_n(\xi)$ preserves the orientation on $\mathbb{T}$ for $z$ near $\xi_0$, we deduce that the normal vector $\omega_n'(\xi_0) \cdot \nu$ 
 points into the unbounded exterior domain $\bbC\backslash\overline{D_n^i}$. In view of the assumption (i), this implies that for sufficiently small $t>0$,
 \begin{equation}\label{eq:omega_n_mapping}
\omega_n(\xi_0 + t\nu) \in D \quad\mbox{and} \quad
\omega_n(\xi_0 - t\nu) \in D_n^i
 \end{equation}
respectively.

\emph{Step 2. Computing $Ind(D_n^i; 1)$.} Recall that $\omega_n(z)$ is a conformal mapping at $\xi_0$. There exists an inverse map near $\zeta_0$, which we denote by $f: B(\zeta_0, \delta_0) \to B(\xi_0, \eta_0)$. By virtue of \eqref{eq:omega_n_mapping}, we have
\begin{equation}\label{eq:f_mapping1}
f\big(B(\zeta_0, \delta_0)\cap D\big) \subset B(\xi_0, \eta_0) \cap \{\xi: |\xi|>1\} 
\end{equation}
and
\begin{equation}\label{eq:f_mapping2}
f\big(B(\zeta_0, \delta_0)\cap D_n^i\big) \subset B(\xi_0, \eta_0) \cap \{\xi: |\xi|<1\}.
\end{equation}
Since $\Ind(D)=0$ implies that there are exactly two eigenvalues of the monodromy matrix $\cM(\omega)$, say $\lambda_{1}(\omega)$ and  $\lambda_{2}(\omega)$, that attain modulus strictly greater than $1$ for $\omega\in D$. From  \eqref{eq:f_mapping1}, we see that the mapping $f$ coincides with one of these two eigenvalue functions. Without loss of generality, let us assume that $f(\omega)=\lambda_1(\omega)$ for $\omega\in B(\zeta_0, \delta_0)$. In view of \eqref{eq:f_mapping2}, there holds $|\lambda_1(\omega)|<1$ for $\omega\in B(\zeta_0, \delta_0)\cap D_n^i$. On the other hand, due to the uniqueness of the inverse map $f$ in $B(\zeta_0, \delta_0)$, we deduce that
\begin{equation}
   |\lambda_2(\omega)|>1, |\lambda_3(\omega)|<1, |\lambda_4(\omega)|<1
\end{equation}
for $\omega\in B(\zeta_0, \delta_0)$. We conclude that 
\[
Ind(D_n^i; 1) =1. 
\]

Following the parallel lines, it can be shown that $Ind(D_n^i; 1) =-1$  when $\gamma_n^i$ is negatively oriented. 
\end{proof}

The above theorem gives the topological index of a Jordan domain $D_n^i$, which is enclosed by a single dispersion curve $\gamma_n$ and does not intersect with other dispersion curves. This result extends to the more general setting in which the dispersion curves are allowed to intersect, as stated in the following corollary. 

\begin{coro} \label{prop:Ind_Dn}
Assume that $\gamma_n=\omega_n(\bbT)$ is a closed dispersion curve, and that $\omega_n(\cdot)$ can be extended analytically near $\bbT$. Let $D_n^i$ be a Jordan domain partitioned by $\gamma_n$ with the simple closed curve $\gamma_n^i\subset \gamma_n$ as its boundary. Assume that $\gamma_n\cap\gamma_m\neq\emptyset$ for some $m\neq n$. Let
\[
\widetilde D \subset D_n^i\backslash \bigcup_{m\neq n}D_m
\]
be a connected component adjacent to the principal gapped component $D$ through a regular point of $\gamma_n^i$ satisfying the assumptions (i)(ii) in Theorem \ref{thm:Ind_Dn}. Then
    \[
    \Ind\left(\widetilde D; 1\right) =  W(\gamma_{n}; \omega_B) = 1 \quad \mbox{for} \; \omega_B \in \widetilde D
    \]
when $\gamma_n^i$ is positively oriented, and
    \[
    \Ind\left(\widetilde D; 1\right) =  W(\gamma_{n}; \omega_B) = -1 \quad \mbox{for} \; \omega_B \in \widetilde D
    \]
when $\gamma_n^i$ is negatively oriented.    
In the above, $\displaystyle{D_m=\bigcup_{i=1}^{i_0^m} D_{m}^i}$ is the set of all bounded domains partitioned by the dispersive curve $\gamma_m$.    
\end{coro}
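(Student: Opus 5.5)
The plan is to repeat the local argument of Theorem \ref{thm:Ind_Dn} at the regular boundary point through which $\widetilde D$ touches $D$. The global winding identity will be handled separately via Theorem \ref{thm-winding}.

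\emph{Step 1: local geometry.} Let $\zeta_0=\omega_n(\xi_0)\in\gamma_n^i$ be the regular point in the hypothesis, with outward normal $\nu$. By assumption (i), $\zeta_0+t\nu\in D$ and $\zeta_0-t\nu\in\widetilde D$ for small $t>0$. Since $\widetilde D$ and $D$ are components of $\Gamma_1^c$ adjacent through $\zeta_0$, I will first choose $\delta_0>0$ small so that $B(\zeta_0,\delta_0)\cap\Gamma_1=B(\zeta_0,\delta_0)\cap\gamma_n$. This is possible if $\zeta_0$ is not an intersection point with $\gamma_m$, $m\neq n$. I expect this to follow from the adjacency hypothesis: otherwise another curve passes through $\zeta_0$, and a third component would lie between $D$ and $\widetilde D$ along the normal. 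The set $\Gamma_1$ is closed locally, since only finitely many bands meet a bounded set. Shrinking $\delta_0$ further, the analytic, locally injective map $\omega_n$ near $\xi_0$ (with $\omega_n'(\xi_0)\neq0$) gives a local inverse $f:B(\zeta_0,\delta_0)\to B(\xi_0,\eta_0)$. The ball then splits into exactly two pieces, $B\cap D$ and $B\cap\widetilde D$.

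\emph{Step 2: index computation.} As in Step 1 of the proof of Theorem \ref{thm:Ind_Dn}, positive orientation gives
\[
f(B\cap D)\subset\{|\xi|>1\},\qquad f(B\cap\widetilde D)\subset\{|\xi|<1\}.
\]
Since $\Ind(D)=0$, exactly two eigenvalues of $\cM(\omega)$ have modulus $>1$ on $D$, and $f$ coincides with one of them, say $\lambda_1$. The remaining three branches $\lambda_2,\lambda_3,\lambda_4$ cannot meet $\bbT$ in $B(\zeta_0,\delta_0)$. Indeed, $\lambda_j(\omega)\in\bbT$ would place $\omega$ on $\Gamma_1\cap B=\gamma_n\cap B=\omega_n(f(B)\cap\bbT)$. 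By local uniqueness of the root $\xi$ near $\xi_0$, such a crossing could only occur through $\lambda_1$, unless $\lambda_j$ also meets the same unit-circle value. That would make $(\zeta_0,\xi_0)$ a multiple root, excluded by (ii) and the Remark following the theorem. Hence these three branches keep their moduli across $\gamma_n\cap B$. The count on $\widetilde D$ is then three eigenvalues inside and one outside, so $\Ind(\widetilde D;1)=1$, and this value is constant on $\widetilde D$ by Lemma \ref{lem-invariance}.

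\emph{Step 3: winding number.} For $\omega_B\in\widetilde D\subset D_n^i$, Theorem \ref{thm-winding} gives $W(\gamma_n;\omega_B)=1$. Its proof only uses that $\omega_B$ lies inside $\gamma_n^i$ and outside the other Jordan pieces $\gamma_n^j$, not the non-intersection with $\gamma_m$. The negatively oriented case follows by reversing all inequalities, giving $-1$.

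The main obstacle is Step 1 together with the branch-separation part of Step 2. There one must ensure that near $\zeta_0$ no other band $\gamma_m$, and no other eigenvalue branch of the same $\cM(\omega)$, touches $\bbT$, so that only $\lambda_1$ changes side. I would handle this by appealing to the simplicity of the root $(\zeta_0,\xi_0)$ and the finiteness of the bands meeting a compact neighborhood.
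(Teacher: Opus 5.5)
Your route is the paper's. Its proof of the corollary is the one-line remark that \eqref{eq:f_mapping1} still holds and \eqref{eq:f_mapping2} holds with $D_n^i$ replaced by $\widetilde D$, which is exactly your Steps 1--2. Your Step 3 is also fine, since the proof of Theorem~\ref{thm-winding} never uses non-intersection with other $\gamma_m$. The problem lies in the two reasons you give for the step you flag yourself, namely that only $f=\lambda_1$ crosses $\bbT$ near $\zeta_0$.

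\textbf{First reason.} Adjacency does not give $\zeta_0\notin\gamma_m$. Suppose some $\gamma_m$ crosses $\gamma_n^i$ at $\zeta_0$ transversally but not perpendicularly. Then the normal segment meets $\Gamma_1$ only at $\zeta_0$, and the points $\zeta_0\pm t\nu$ sit in opposite sectors. Nothing lies ``between'' $D$ and $\widetilde D$, and (i) still holds. At such a point the inverse branches of both $\omega_n$ and $\omega_m$ change sides of $\bbT$, so the jump of $\Ind$ need not be $1$. For instance, if $D\subset D_m^j$ with $\gamma_m^j$ positively oriented, the jump is $1-1=0$. So the condition has to be taken as the meaning of ``regular point'': $\zeta_0$ lies on no $\gamma_m$ with $m\neq n$, and $\xi_0$ is its only preimage under $\omega_n$. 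This is what the paper's one-liner tacitly uses; it cannot be derived from (i).

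\textbf{Second reason.} The Remark after Theorem~\ref{thm:Ind_Dn} says that simplicity of $(\zeta_0,\xi_0)$ implies (ii), not the converse. Assumption (ii) alone allows $\xi_0$ to be a multiple eigenvalue of $\cM(\zeta_0)$, e.g.\ when a second analytic band passes through $(\zeta_0,\xi_0)$. The simplicity you need comes instead from the regularity condition just described:
\begin{itemize}
\item An eigenvalue $\xi'\neq\xi_0$ of $\cM(\zeta_0)$ on $\bbT$ would put $\zeta_0$ on another $\gamma_m$, or give it a second preimage under $\omega_n$.
\item If $\xi_0$ were a multiple eigenvalue, then $\partial_\xi P(\zeta_0,\xi_0)=0$. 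Differentiating $P(\omega_n(\xi),\xi)=0$ and using $\omega_n'(\xi_0)\neq0$ gives $\partial_\omega P(\zeta_0,\xi_0)=0$. So $\zeta_0$ is a multiple zero of $P(\cdot,\xi_0)$, and a second band takes the value $\zeta_0$ at $\xi_0$, which the regularity condition excludes.
\end{itemize}
Hence $\xi_0$ is the only eigenvalue of $\cM(\zeta_0)$ on $\bbT$, and it is simple. Lemma~\ref{lem:esner} then keeps the other three eigenvalues a positive distance from $\bbT$ on a small ball around $\zeta_0$. This replaces your crossing-by-crossing argument and finishes Step 2. With these two repairs the proof goes through.
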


The proof for the corollary is similar to the proof of Theorem \ref{thm:Ind_Dn}.
It is clear that \eqref{eq:f_mapping1} still holds, and \eqref{eq:f_mapping2} holds with $D_n^i$ replaced by $\widetilde D$. As such, one can still conclude that only three eigenvalues attain modulus smaller than $1$ in the component $\widetilde D$ when $\gamma_n^i$ is positively oriented.

\bigskip

\begin{remark}
In the one-dimensional setting considered in this work, the domain partitioned by $\gamma_n$ cannot have a winding number greater than or equal to two. To illustrate this, let us consider the positively oriented dispersion curve $\gamma_n$ shown in Figure~\ref{fig:winding2}, which partitions the complex plane into the three regions $D_n^1$, $D_n^2$, and $D_n^e$. The component $D_n^2$ is a higher-winding component with
\[
 W(\gamma_n;\omega_B)=2 \quad \mbox{for} \; \omega_B\in D_n^2.
\]
An extension of the argument in Theorems~\ref{thm:Ind_Dn} by using the conformal mapping would show that $\Ind(D_n^2)=2$, which contradicts with the fact that $\Ind(D_n^2) \in \{0, \pm 1\}$ (cf. Remark \ref{rmk:ind}).
\end{remark}
   
\begin{figure}[!htbp]
    \centering
    \includegraphics[height=4.5cm]{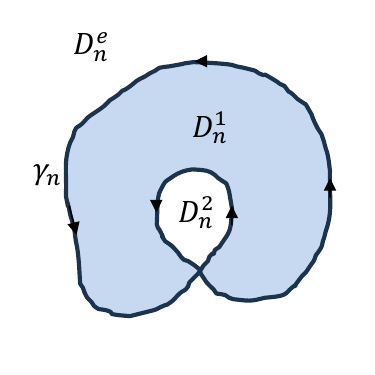}
    \vspace*{-25pt}
    \caption{The domains enclosed by $\gamma_n$ are not Jordan domains.}
    \label{fig:winding2}.
\end{figure}

\subsection{Skin modes in semi-infinite photonic crystals with non-trivial spectral topology}

We consider a semi-infinite photonic crystal sitting in the interval $I_-:=(-\infty, 0)$ or $I_+ :=(0, \infty)$.
For $z\in I_\pm$, the medium parameter is given by $(\boldsymbol{\varepsilon}(z), \boldsymbol{\mu}(z))$, and the ambient medium outside the photonic crystal is assumed to be vacuum.
The corresponding spectral problem is formulated as finding $(\omega, \bPhi) \in \bbC \times (L^2(I_\pm))^4$ such that
\begin{subequations}
\label{eq:semi_infinite}
\begin{align}
   & \frac{d}{dz} \bPhi(z) = \omega \cQ^{-1}\cA(z) \bPhi(z), \quad z \in I_\pm;  \label{eq:ODE_system_semi_infinite} \\
   & \bPhi(0)= \left(\begin{matrix}
        \bE_0(0) \\ \bH_0(0) \label{eq:IC_semi_infinite}
    \end{matrix} \right).
\end{align}    
\end{subequations}
In the above, $\{\bE_0(z), \bH_0(z) \}$ is the outgoing electromagnetic wave in the ambient medium.  More explicitly,  $\{\bE_0(z), \bH_0(z) \}$ can be expressed as
\begin{equation*}
   \left(\begin{matrix}
        \bE_0(z) \\ \bH_0(z)
    \end{matrix} \right)
    =    \left(\begin{matrix}
        \bc_1 \\ \bc_2
    \end{matrix} \right)e^{\i k_0 z}
    \quad \mbox{and} \quad
    \left(\begin{matrix}
        \bE_0(z) \\ \bH_0(z)
    \end{matrix} \right)
    =    \left(\begin{matrix}
        \bc_1 \\ \bc_2
    \end{matrix} \right)e^{-\i k_0 z}
\end{equation*}
for $z>0$ and $z<0$ respectively. In the above, $k_0:=\omega/c$ is the wavenumber in the vacuum, and $\bc_1$, $\bc_2 \in \bbC^2$ are constant vectors that satisfy 
\begin{equation}\label{eq:c1c2}
    \bc_2 = Q \bc_1    \quad \mbox{and} \quad \bc_2 = -Q \bc_1 
\end{equation}
when $z>0$ and $z<0$ respectively.

The corresponding infinite periodic problem is given by \eqref{eq:eig_prob_complex}, which attains the 
dispersion curves $\{\gamma_{n} \}_{n\in \bbN^+}$ over the complex plane.
We consider $\omega \in \bbC$ that is located in one of bounded domains partitioned by 
the dispersion curve $\gamma_{n}$. For clarity of presentation, let us assume that $\gamma_n\cap\gamma_m=\emptyset$ for $m\neq n$, and consider $\omega\in D_n^i$, where $D_n^i$ is a Jordan domain for which Theorem \ref{thm:Ind_Dn} applies. Then the eigenvalues of the semi-infinite problem \eqref{eq:semi_infinite} are described in the following theorem. 
Note that the assumption on the non-intersection of the dispersion curves is not essential as long as we restrict $\omega$ to a smaller domain $D_n^i\backslash \bigcup_{m \neq n} D_m$; we refer to Corollary \ref{prop:Ind_Dn} and Remark \ref{rmk:edge_modes_intersect} for the discussions of this scenario.
 
\begin{thm}\label{thm:edge_modes}
Let $D_n^i$ be a Jordan domain partitioned by $\gamma_n$ as in Theorem \eqref{thm-winding}. 
    \begin{itemize}
        \item [(i)]  If  
        \begin{equation}\label{eq:wind_p}
            \Ind(D_n^i; 1) =  W(\gamma_{n}; \omega_B) = 1 \quad \mbox{for} \; \omega_B \in D_n^i,
        \end{equation}
         then each $\omega \in D_n^i$ is an eigenvalue for the semi-infinite problem \eqref{eq:semi_infinite} defined over the interval $I_+$, as long as $\omega$ is not a branch point of $\lambda_j(\omega)$ (j=1,2,3,4) for the monodromy matrix $\cM(\omega)$.
         The corresponding eigenfunction $\bPhi(z)$ decays exponentially as $z\to\infty$.
        \item [(ii)]  If 
        \begin{equation}\label{eq:wind_n}
            \Ind(D_n^i; 1) =  W(\gamma_{n}; \omega_B) = -1 \quad \mbox{for} \; \omega_B \in D_n^i,
        \end{equation}
    then each $\omega \in D_n^i$ is an eigenvalue for the semi-infinite problem \eqref{eq:semi_infinite} defined over the interval $I_-$, as long as $\omega$ is not a branch point of $\lambda_j(\omega)$ (j=1,2,3,4) for the monodromy matrix $\cM(\omega)$.
         The corresponding eigenfunction $\bPhi(z)$ decays exponentially as $z\to-\infty$.
    \end{itemize}
\end{thm}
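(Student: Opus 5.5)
The plan is to turn the index condition into a dimension count of decaying solutions and intersect that subspace with the two-dimensional subspace of admissible boundary data coming from the vacuum. I treat case (i) in detail; case (ii) follows by reversing the direction of propagation.

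\emph{Step 1 (decaying subspace).} Fix $\omega\in D_n^i$ that is not a branch point. Then the four eigenvalues $\lambda_j(\omega)$ of $\cM(\omega)$ are pairwise distinct, so $\cM(\omega)$ is diagonalizable with eigenvectors $\bv_1,\dots,\bv_4$. Since $\omega\in\Gamma^c$, none of the $\lambda_j(\omega)$ lies on $\bbT$. By \eqref{eq:wind_p} and the definition of $\Ind$, exactly three of them, say $\lambda_1,\lambda_2,\lambda_3$, satisfy $|\lambda_j(\omega)|<1$, and $|\lambda_4(\omega)|>1$. Set
\[
V_s:=\mathrm{span}\{\bv_1,\bv_2,\bv_3\},
\]
so that $\dim V_s=3$. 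For $\bPhi(0)\in V_s$, let $\bPhi(z)=\cT_\omega(z,0)\bPhi(0)$. Periodicity of $\cA$ gives $\cT_\omega(z+m,0)=\cT_\omega(z,0)\cM(\omega)^m$ for $z\in[0,1)$ and $m\in\bbN$. Since $\sup_{z\in[0,1]}\|\cT_\omega(z,0)\|<\infty$ by Gronwall, we obtain
\[
|\bPhi(z+m)|\le C\,\rho^m|\bPhi(0)|,\qquad \rho:=\max_{j\le 3}|\lambda_j(\omega)|<1 .
\]
Hence $\bPhi$ decays exponentially as $z\to\infty$ and lies in $(L^2(I_+))^4$.

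\emph{Step 2 (interface condition).} By \eqref{eq:IC_semi_infinite} and \eqref{eq:c1c2}, the admissible data at $z=0$ for $I_+$ form the subspace
\[
W_+:=\Big\{\begin{pmatrix}\bc_1\\ Q\bc_1\end{pmatrix}:\bc_1\in\bbC^2\Big\},
\]
which has $\dim W_+=2$. In $\bbC^4$ we have $\dim(V_s\cap W_+)\ge 3+2-4=1$, so there is a nonzero vector $\bPhi(0)\in V_s\cap W_+$. Its first two components $\bc_1$ cannot vanish, because $\bc_1=0$ forces $\bc_2=Q\bc_1=0$ and hence $\bPhi(0)=0$. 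The field $\bPhi(z)=\cT_\omega(z,0)\bPhi(0)$ then solves \eqref{eq:ODE_system_semi_infinite}. It matches the outgoing vacuum wave $(\bc_1,Q\bc_1)^Te^{\i k_0 z}$ at $z=0$ and is nontrivial, because $\cT_\omega(z,0)$ is invertible (its determinant is $1$). By Step 1 it decays exponentially, so $\omega$ is an eigenvalue of \eqref{eq:semi_infinite} on $I_+$.

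\emph{Step 3 (case (ii)).} If $\Ind(D_n^i;1)=-1$, then exactly three eigenvalues satisfy $|\lambda_j(\omega)|>1$. Let $V_u$ be the span of the corresponding eigenvectors. For $z\in[0,1)$ and $m\in\bbN$ we use $\cT_\omega(z-m,0)=\cT_\omega(z,0)\cM(\omega)^{-m}$, which yields exponential decay as $z\to-\infty$ for data in $V_u$. The boundary subspace is now
\[
W_-:=\Big\{\begin{pmatrix}\bc_1\\ -Q\bc_1\end{pmatrix}:\bc_1\in\bbC^2\Big\},
\]
and the same dimension count $\dim(V_u\cap W_-)\ge1$ produces the eigenfunction on $I_-$.

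\emph{Main obstacle.} The only delicate point is the decay estimate, which needs a basis of (generalized) eigenvectors splitting $\bbC^4$ according to $|\lambda|<1$ and $|\lambda|>1$. Excluding branch points makes $\cM(\omega)$ diagonalizable, which settles this cleanly. If I wanted to drop that hypothesis, I would instead use the Riesz spectral projection onto the part of the spectrum inside $\bbT$. Its range is still three-dimensional and still gives decay $C_\epsilon(\rho+\epsilon)^m$ for any $\epsilon>0$ with $\rho+\epsilon<1$, so the argument would go through unchanged. Beyond this, the proof is pure linear algebra once Theorem~\ref{thm:Ind_Dn} has supplied the count of three decaying eigenvalues.
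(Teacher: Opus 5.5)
Your proof is correct and is essentially the paper's argument. The paper also takes $\bPhi(0)$ in the span of the three eigenvectors of $\cM(\omega)$ whose eigenvalues lie inside $\bbT$, and it imposes the two vacuum matching conditions as a $2\times 3$ homogeneous system $\bB\balpha=\mathbf{0}$; that system having a nontrivial solution is exactly your count $\dim(V_s\cap W_+)\ge 1$. Your extra details are sound refinements rather than a different route: the decay between integer periods via $\cT_\omega(z+m,0)=\cT_\omega(z,0)\cM(\omega)^m$, the nontriviality check, and the Riesz-projection remark showing that the non-branch-point hypothesis can be dropped.
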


\bigskip
\begin{remark}
By virtue of Theorem \ref{thm-winding}, under suitable assumptions,  \eqref{eq:wind_p} and \eqref{eq:wind_n} hold when $\gamma_n^i$, the boundary of the $D_n^i$, is positively and negatively oriented respectively. Hence, the above theorem relates the existence of the skin modes for the semi-infinite problem \eqref{eq:semi_infinite} with the winding direction of $\gamma_n^i$.
\end{remark}

\begin{proof}

If $ \Ind(D_n^i; 1)=1$, there exist three eigenvalues of the monodromy matrix $\cM(\omega)$ with modulus smaller than 1, and one eigenvalue with modulus larger than 1. Furthermore, $\omega$ is not a branch points of $\lambda_j(\omega)$ (j=1,2,3,4). Therefore, without loss of generality, we have
\begin{equation*}
    |\lambda_{1}(\omega)| < |\lambda_{2}(\omega)| < |\lambda_{3}(\omega)| < 1 < |\lambda_{4}(\omega)|.
\end{equation*}  
Let $\bPhi_j:=[\bE_j, \bH_j]^T$ be the corresponding eigenvectors of $\cM(\omega)$ for $j=1,2,3,4$.
We choose the initial value of $\bPhi(z)$ at $z=0$ as the linear combination:
\begin{equation}\label{eq:phi0}
    \bPhi(0) = \alpha_1 \bPhi_1 + \alpha_2 \bPhi_2 + \alpha_3 \bPhi_3.
\end{equation}
Then the solution of the ODE system \eqref{eq:ODE_system_semi_infinite} with the above initial value takes the following form at $z=n$:
\begin{equation}\label{eq:phi}
    \bPhi(z) = \alpha_1 \lambda_{1}^n \bPhi_1 + \alpha_2 \lambda_{2}^n \bPhi_2 + \alpha_3 \lambda_{3}^n \bPhi_3,
\end{equation}
which decays exponentially as $n\to\infty$. We only need to show that there exists $\alpha_1$, $\alpha_2$, and $\alpha_3$ such that the initial condition \eqref{eq:IC_semi_infinite} is satisfied.  In view of \eqref{eq:c1c2}, this boils down to solving the following linear system of two equations:
\begin{equation}\label{eq:linear_sys}
  \bB  \balpha = \bf{0}, \quad\mbox{where} \; \bB= [\bH_1,\bH_2,\bH_3] + Q \, [\bE_1,\bE_2,\bE_3] \in \bbC^{2\times3}.
\end{equation}
The above linear system attains at least one solution $\balpha:= [\alpha_1, \alpha_2, \alpha_3]^T \in \bbC^3$. This proves that  $\omega$ is an eigenvalue of the problem \eqref{eq:semi_infinite}, with the corresponding eigenfunction decaying in the form of \eqref{eq:phi}.

The argument is similar when $\Ind(D_n^i;1)=-1$. Now three eigenvalues of  $\cM(\omega)$ attain modulus larger than 1, and one eigenvalue has modulus smaller than 1. By ordering them as
$$ |\lambda_4(\omega)|<1<|\lambda_1(\omega)|\le|\lambda_2(\omega)|\le|\lambda_3(\omega)|, $$
we express the solution at $z=-n$ as $\bPhi(-n)=\sum_{j=1}^3\alpha_j\lambda_j^{-n}\bPhi_j$, which decays exponentially as $n\to\infty$. The continuity condition in \eqref{eq:c1c2} again yields a linear system $\bB\balpha=\mathbf{0}$ with $\bB=[\bH_1,\bH_2,\bH_3]-Q[\bE_1,\bE_2,\bE_3]\in\bbC^{2\times3}$, which admits a nontrivial solution.
This proves the theorem.

\end{proof}

\begin{remark}
    Theorem \ref{thm:edge_modes} holds if other bound boundary conditions are imposed. For instance, if the photonic crystal is terminated with a perfect electric or magnetic conducting wall at $z=0$, which gives
\begin{equation*}
    \bE(0) = \bf{0} \quad \mbox{and} \quad \bH(0) = \bf{0},
\end{equation*}
for the initial condition \eqref{eq:IC_semi_infinite}, respectively. Similarly, one can find the initial value in the form of \eqref{eq:phi0} that satisfies the above boundary condition.
\end{remark}

\bigskip

\begin{remark}\label{rmk:edge_modes_intersect}
If $\gamma_n\cap\gamma_m\neq\emptyset$ for some $m \neq n$, then Theorem \ref{thm:edge_modes} holds on any connected component $\widetilde D\subset D_n^i\backslash \bigcup_{m \neq n} D_m$ for which Corollary \ref{prop:Ind_Dn} applies. More specifically, $\omega$ is an eigenvalue of \eqref{eq:semi_infinite} over the interval $I_+$ and $I_-$ when
\[
   \Ind(\widetilde D;1)=1
   \quad\mbox{and}\quad
   \Ind(\widetilde D;1)=-1
\]
respectively.    
\end{remark}

\bigskip

We now state the bulk-edge correspondence, which relates the lower bound and the exact number of skin modes to the bulk topological invariant.

\begin{prop}  (\textbf{Bulk-edge correspondence})
Let $D_n^i$ be a Jordan domain partitioned by $\gamma_n$ as in Theorem \eqref{thm-winding}.  Set $s:=\Ind(D_n^i; 1)$ and assume that $s=\pm1$. For each non-branch-point $\omega\in D_n^i$ that is the eigenvalue for \eqref{eq:semi_infinite}, the dimension of the eigensapce is at least $1$, and this lower bound is independent of $\mathrm{rank}(\bB)$. If $\mathrm{rank}(\bB)=2$, then the dimension  of the eigenspace is exactly $1$.
\end{prop}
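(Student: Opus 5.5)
The plan is to identify the eigenspace of \eqref{eq:semi_infinite} at $\omega$ with the kernel of the matrix $\bB$ in \eqref{eq:linear_sys}, and then use rank--nullity. We treat $s=1$ and the interval $I_+$; the case $s=-1$ on $I_-$ is identical, with $\lambda_j^{-n}$ and the sign change in $\bB$.

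\emph{Step 1: the eigenspace is $\ker\bB$.} Fix a non-branch-point $\omega\in D_n^i$. Then the four eigenvalues $\lambda_j(\omega)$ are distinct, so $\cM(\omega)$ is diagonalizable with eigenvectors $\bPhi_1,\dots,\bPhi_4$ forming a basis of $\bbC^4$. Since $\Ind(D_n^i;1)=1$, we may order them as in the proof of Theorem~\ref{thm:edge_modes}: $|\lambda_{1,2,3}|<1<|\lambda_4|$.

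Any solution of \eqref{eq:ODE_system_semi_infinite} is determined by $\bPhi(0)=\sum_{j=1}^4\alpha_j\bPhi_j$, and at integer points it equals $\bPhi(n)=\sum_j\alpha_j\lambda_j^n\bPhi_j$. On each period, $\bPhi(z)=\cT_\omega(z-n,0)\bPhi(n)$ with $\cT_\omega$ uniformly bounded on $[0,1]$. Hence
\[
\bPhi\in (L^2(I_+))^4 \iff \sum_n|\bPhi(n)|^2<\infty \iff \alpha_4=0,
\]
using linear independence of the $\bPhi_j$ and $|\lambda_4|>1$.

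The boundary condition \eqref{eq:IC_semi_infinite} with \eqref{eq:c1c2} requires that $\bPhi(0)=(\bE,\bH)$ satisfies $\bH(0)=Q\bE(0)$. The free vector $\bc_1=\bE(0)$ is determined by $\bPhi(0)$. With the sign convention of \eqref{eq:linear_sys}, this is exactly $\bB\balpha=0$. Thus the map $\balpha\mapsto\bPhi$ is a linear bijection from $\ker\bB$ onto the eigenspace; injectivity holds because the $\bPhi_j$ are independent and $\bPhi(0)$ determines $\bPhi$.

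\emph{Step 2: rank--nullity.} Since $\bB\in\bbC^{2\times3}$, we have $\dim\ker\bB=3-\mathrm{rank}(\bB)\ge 1$. This gives the lower bound for every value of $\mathrm{rank}(\bB)\in\{0,1,2\}$. If $\mathrm{rank}(\bB)=2$, the dimension is exactly $1$.

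The point that the lower bound is ``independent of $\mathrm{rank}(\bB)$'' comes purely from the count
\[
\#\{\text{decaying directions}\}-\#\{\text{boundary conditions}\}=3-2=1.
\]
This count is governed by the bulk index: the number of decaying directions is $2+s$ for $I_+$, and $2-s$ for $I_-$ after reversing roles. Hence the count equals $|s|=1$.

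\emph{Main obstacle.} The delicate point is Step 1's claim that square integrability forces $\alpha_4=0$ and that no other solutions arise. This relies on diagonalizability, which is why branch points are excluded. It also relies on controlling $\bPhi$ between integer points via boundedness of the transfer matrix over one period, so that the $L^2$ condition reduces to the sequence $\bPhi(n)$.
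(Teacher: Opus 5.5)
Your proposal is correct and follows the paper's route: reduce the eigenproblem to $\bB\balpha=\mathbf{0}$ with $\balpha\in\bbC^3$ and apply rank--nullity to the $2\times3$ matrix $\bB$. Your Step~1 also supplies what the paper leaves implicit, namely that the eigenspace is exactly $\ker\bB$: square integrability forces $\alpha_4=0$ (this direction also needs the uniform bound on $\cT_\omega(z,0)^{-1}$ for $z\in[0,1]$), and the map $\balpha\mapsto\bPhi$ is injective. One harmless sign slip: on $I_+$ the vacuum lies in $z<0$, so the interface condition is $\bH(0)=-Q\bE(0)$ (i.e.\ $\bc_2=-Q\bc_1$), which is what $\bB\balpha=\mathbf{0}$ encodes, not $\bH(0)=Q\bE(0)$ as you wrote.
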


\begin{proof}
We assume that $s=1$. From the proof of Theorem \ref{thm:edge_modes}, by expressing the skin mode $\bPhi(z)$ as \eqref{eq:phi}, the spectral problem \eqref{eq:semi_infinite} reduces to solving the linear system \eqref{eq:linear_sys}. Since $\bB\in\bbC^{2\times3}$, there holds $\dim\ker\bB\ge1$, regardless of the rank of $\bB$. If $\mathrm{rank}(\bB)=2$, then $\dim\ker\bB=1$.

\end{proof}

\section*{Acknowledgments}
JL is partially supported by NSF grant DMS-2410645 and Auburn University Research Support Program. HZ is partially supported by Hong Kong RGC grant GRF 16304621 and NSFC grant 12371425.
JL also gratefully acknowledges the support and hospitality provided by Hong Kong University of Science and Technology during his visit and when part of this work was performed.

\begin{appendices}

\section{Roots of quartic functions}\label{sec:roots_quartic_fun}
Consider solving a quartic equation
\begin{equation}\label{eq:quartic}
    \lambda^4 + a_3 \lambda^3 + a_2 \lambda^2 + a_1 \lambda + a_0 = 0.
\end{equation}
Following the Ferrari method and introducing $\lambda=z-\frac{a_3}{4}$, the quartic equation \eqref{eq:quartic} can be reduced to the depressed quartic equation
\begin{equation}\label{eq:dep_quartic}
    z^4 + b_2 z^2 + b_1 z + b_0 = 0,
\end{equation}
where
\begin{equation*}
    b_2 = a_2 -\frac{3}{8} a_3^2; \quad
    b_1 = a_1 -\frac{1}{2} a_2a_3 + \frac{1}{8}a_3^3;  \quad
    b_0  = a_0 - \frac{1}{4} a_1a_3 + \frac{1}{16} a_2a_3^2 - \frac{3}{256} a_3^4.
\end{equation*}
It can be shown that if $w$ solves the cubic polynomial
\begin{equation}\label{eq:cubic_poly}
    w^3 + 2 b_2 w^2 + (b_2^2-4b_0)w - b_1^2 = 0,
\end{equation}
then the depressed quartic function in \eqref{eq:dep_quartic} can be decomposed as the product of two quadratic functions:
\begin{equation*}    
    \left(z^2 + \frac{b_2}{2} + \frac{w}{2}  + \sqrt{w} z - \frac{b_1}{2\sqrt{w}} \right) \cdot     \left(z^2 + \frac{b_2}{2} + \frac{w}{2}  - \sqrt{w} z + \frac{b_1}{2\sqrt{w}} \right) = 0.
\end{equation*}
Thus one can obtain the solutions of the quartic equation by solving the two quadratic equations above:
\begin{equation}\label{eq:root_quartic}
    \begin{aligned}
        \lambda_{1,0} & = \frac{a_3}{4} - \frac{\sqrt w}{2} + \frac{\sqrt{u_+}}{2},  \quad
        \lambda_{2,0} = \frac{a_3}{4} - \frac{\sqrt w}{2} - \frac{\sqrt{u_+}}{2}, \\
        \lambda_{3,0} & = \frac{a_3}{4} + \frac{\sqrt w}{2} + \frac{\sqrt{u_-}}{2},  \quad
        \lambda_{4,0} = \frac{a_3}{4} + \frac{\sqrt w}{2} - \frac{\sqrt{u_-}}{2}, \\        
    \end{aligned}
\end{equation}
where
\begin{equation*}
     u_\pm:=-2b_2-2w \pm 2b_1/\sqrt{w}.
\end{equation*}
In the above, $w$ can be expressed explicitly using the cubic formula:
\begin{equation}\label{eq:vw}
    w = -\frac{1}{3} \left( 2b_1 + v + \frac{\Delta_0}{v} \right), \quad v = \frac{1}{\sqrt[3]{2}} \cdot \sqrt[3]{\Delta_1 + \sqrt{\Delta_1-4\Delta_0^3}},
\end{equation}
where
\begin{equation*}
    \Delta_0 = a_2^2 - 3a_1a_3 + 12 a_0, \quad
    \Delta_1 = 2 a_2^3 - 9 a_1a_2a_3 + 27 a_0 a_3^2 + 27 a_1^2 - 72 a_0a_2.
\end{equation*}
In particular, $w$ above is chosen to be a real root of \eqref{eq:cubic_poly} when $a_j$ are real.
The discriminant of the quartic function is defined as $\Delta = -\frac{\Delta_1-4\Delta_0^2}{27}$.

\section{Transfer matrices for the constant media}\label{sec:transfer_mat}
It can be computed that the transfer matrix for the $A$ layer takes the following form (\cite{figotin2001}):
\begin{equation*}
   \cT_A(\omega; L) = \left(
  \begin{matrix}
        t^2\rho_1 + \tilde{t}^2\rho_2     &   t\tilde{t}(\rho_1 - \rho_2)    &  \i t\tilde{t}\left( \tilde\rho_2n_2^{-1} - \tilde\rho_1n_1^{-1} \right) &  \i t^2\tilde\rho_1 n_1^{-1} + \i \tilde{t}^2\tilde\rho_2 n_2^{-1}   \\
         t\tilde{t}(\rho_1 - \rho_2)  &  \tilde{t}^2\rho_1 + t^2\rho_2       &      -\left(\i t^2\tilde\rho_1n_1^{-1} + \i \tilde{t}^2\tilde\rho_2n_2^{-1} \right) &  \i t\tilde{t}\left(\tilde\rho_1n_1^{-1} - \tilde\rho_2n_2^{-1}\right) \\
        \i t\tilde{t}\left(\tilde\rho_2n_2 - \tilde\rho_1n_1 \right)  &   -(\i t^2\tilde\rho_1n_1 + \i \tilde{t}^2\tilde\rho_2n_2)    &   \tilde{t}^2\rho_1 + t^2\rho_2   &  t\tilde{t} (\rho_2 - \rho_1)     \\
        \i t^2\tilde\rho_1n_1 + \i \tilde{t}^2\tilde\rho_2n_2   & \i t\tilde{t}\left(\rho_1n_1 - \rho_2n_2 \right)  & t\tilde{t}(\rho_2 - \rho_1) & t^2\rho_1 + \tilde{t}^2\rho_2
  \end{matrix}
   \right),     
\end{equation*}
where
\begin{equation*}
    \begin{aligned}
       &  n_1 = \sqrt{\eps_0 + \delta}, \; n_2 = \sqrt{\eps_0 - \delta};  \\
       & \rho_1 = \cos( n_1 \omega L), \; \rho_2 = \cos( n_2 \omega L), \; t = \cos(\varphi); \\
       & \tilde\rho_1 = \sin( n_1 \omega L), \; \tilde\rho_2 = \sin( n_2 \omega L), \; \tilde{t} = \sin(\varphi).
    \end{aligned}
\end{equation*}
The transfer matrix for the $F$ layer is
\begin{equation*}
   \cT_F(\omega; L) =  \frac{1}{2} \left(
  \begin{matrix}
     \sigma_1 + \sigma_2    &  \i(\sigma_1 - \sigma_2)  &  \tilde\sigma_1 \tilde m_1^{-1} - \tilde\sigma_2 \tilde m_2^{-1}   &  \i\tilde\sigma_1 \tilde m_1^{-1} + \i \tilde\sigma_2 \tilde m_2^{-1}  \\
     \i(\sigma_2 - \sigma_1)   &  \sigma_1 + \sigma_2   &  -\i(\tilde\sigma_1 \tilde m_1^{-1} + \tilde\sigma_2 \tilde m_2^{-1})  &  \tilde\sigma_1 \tilde m_1^{-1} - \tilde\sigma_2 \tilde m_2^{-1}  \\
     \tilde m_2 \sigma_2 - \tilde m_1 \sigma_1  &  -\i (\tilde m_1 \sigma_1+\tilde m_2 \sigma_2)  &  \sigma_1 + \sigma_2    &  \i(\sigma_1 - \sigma_2)    \\
     \i(\tilde m_1 \sigma_1+\tilde m_2 \sigma_2) &  \tilde m_2 \sigma_2 - \tilde m_1 \sigma_1  &  \i(\sigma_2 - \sigma_1)   &  \sigma_1 + \sigma_2 
  \end{matrix}
   \right),     
\end{equation*}
where
\begin{equation*}
    \begin{aligned}
       &  m_1 = \sqrt{ (\tilde\eps_0 + \alpha)(1+\beta) }, \; m_2 = \sqrt{ (\tilde\eps_0 - \alpha)(1-\beta) };  \\
       &  \tilde m_1 = \sqrt{ (\tilde\eps_0 + \alpha)/(1+\beta) }, \; \tilde m_2 = \sqrt{ (\tilde\eps_0 - \alpha)/(1-\beta) };  \\
       & \sigma_1 = \cos( m_1 \omega L), \; \sigma_2 = \cos( m_2 \omega L); \\
       & \tilde\sigma_1 = \sin( m_1 \omega L), \; \tilde\sigma_2 = \sin( m_2 \omega L).
    \end{aligned}
\end{equation*}

\section{Another definition of the spectral bands for non-Hermitian photonic crystals}\label{sec:def_spec_bands}
Let $\xi=e^{\i k}\in\bbT$ be an eigenvalue of the monodromy matrix $\cM(\omega)$. In view of the characteristic equation \eqref{eq:chara1} and \eqref{eq:chara2}, the eigenfrequencies $\omega$ for the spectral problem \eqref{eq:eig_prob1} satisfy
\[
P(\omega, \xi):=\det\!\big(\cM(\omega)-\xi\,\cI\big)
=\xi^4+a_3(\omega)\xi^3+a_2(\omega)\xi^2+a_1(\omega)\xi+1=0 .
\]
Since $\cM(\omega)$ is entire in $\omega$, the function $P$, which is a polynomial in $\xi$, is holomorphic in $(\omega, \xi)\in\bbC\times\bbC^*$, where $\bbC^*:=\bbC \backslash\{0\}$. Define $\Sigma:=\big\{(\omega, \xi)\in\bbC\times\bbC^*: P(\omega,\xi)=0\big\}$. It is clear that the spectrum of the differential operator $\cL$ in \eqref{eq:eig_prob1} is given by
\[
\sigma(\cL):=
\big\{\omega\in\bbC: \exists\,\xi\in\bbT
\text{ such that }(\omega, \xi)\in\Sigma\big\}.
\]

For each fixed $\xi\in\bbT$, the entire function $\omega\mapsto P(\omega,\xi)$ has a discrete
zero set, which in the present setting is countably infinite; these zeros can be used to label the spectral
bands $n=1,2,\dots$. Now fix $(\xi_0,\omega_0)\in\Sigma$ with $\xi_0\in\bbT$. If $\partial_\omega P(\omega_0,\xi_0)\neq 0$,
the holomorphic implicit function theorem yields a unique holomorphic branch
$\omega=\varphi(\xi)$ in a complex neighborhood of $\xi_0$, with
$\varphi(\xi_0)=\omega_0$ and $(\xi,\varphi(\xi))\in\Sigma$. This gives the local analyticity of the spectral band. 

More precisely, let us introduce the exceptional set $\mathcal E:=
\big\{(\omega, \xi)\in\bbC\times\bbC^*:\partial_\omega P(\omega,\xi)=0\big\}$, which consists of points at which $\omega$ is a multiple zero of
$P(\cdot,\xi)$. Points in $\mathcal E_{\bbT} := \{ (\omega, \xi) \in \mathcal E: \xi \in \bbT \} $ are called exceptional points for the non-Hermitian eigenvalue problem \cite{bergholtz2021}. Near such a point, the local branches generally admit Puiseux expansions. We assume that the set $\mathcal E_{\bbT}$ is discrete.

To define the spectral bands $\{\omega_n(k)\}_{n=1}^\infty$,  we can fix one reference point $\xi_*\in\bbT$ such that all zeros of $P(\cdot,\xi_*)$ are simple. Then the discrete set $\{\omega\in\bbC:P(\omega,\xi_*)=0\}$ is enumerated as $\{\omega_n(\xi_*)\}_{n\geq 1}$ in any fixed manner. The branch $\omega_n(\xi)$ is
then defined as the holomorphic germ through $(\xi_*,\omega_n(\xi_*))$ and extended by analytic
continuation along paths in $\bbC^*\setminus\pi(\mathcal E)$, where
\[
\pi:\Sigma\to\bbC^*,\qquad \pi(\omega, \xi)=\xi,
\]
is the projection onto the $\xi$-plane. 
In this way, each $\omega_n(\xi)$ defines a locally holomorphic branch of the dispersion relation in $\bbC^*\setminus\pi(\mathcal E)$.

Finally, we emphasize that even if the analytic continuation of the branch $\omega_n(\cdot)$ is well defined along $\mathbb{T}$, it need not return to the original branch after one circuit of $\mathbb{T}$. Indeed,
the monodromy of $\Sigma$ may permute the branches, giving rise to the familiar non-Hermitian eigenvalue swapping around exceptional points, a phenomenon commonly
referred to as \emph{spectral braiding} \cite{zhang2023}. We refer to \cite[Ch.~8]{ahlfors1979complex} for the discussion on the analytic continuation.

\end{appendices}

\printbibliography
\end{document}